\newtheorem{theorem}{Theorem}
\newtheorem{lemma}{Lemma}
\newtheorem{definition}{Definition}
\def\hlinewd#1{%
	\noalign{\ifnum0=`}\fi\hrule \@height #1 %
	\futurelet\reserved@a\@xhline}
\renewcommand{\ALG@name}{Procedure}
\begin{document}


\title{Quantum algorithms for numerical differentiation of expected values with respect to parameters}

\author{Koichi Miyamoto}
\email{koichi.miyamoto@qiqb.osaka-u.ac.jp}
\affiliation{Center for Quantum Information and Quantum Biology, Osaka University \\ 1-3 Machikaneyama, Toyonaka, Osaka, 560-8531, Japan}


\date{\today}

\begin{abstract}
	
The quantum algorithms for Monte Carlo integration (QMCI), which are based on quantum amplitude estimation (QAE), speed up expected value calculation compared with classical counterparts, and have been widely investigated along with their applications to industrial problems such as financial derivative pricing.
In this paper, we consider an expected value of a function of a stochastic variable and a real-valued parameter, and how to calculate derivatives of the expectation with respect to the parameter.
This problem is related to calculating sensitivities of financial derivatives, and so of industrial importance.
Based on QMCI and the general-order central difference formula for numerical differentiation, we propose two quantum methods for this problem, and evaluate their complexities.
The first one, which we call the naive iteration method, simply calculates the formula by iterative computations and additions of the terms in it, and then estimates its expected value by QAE.
The second one, which we name the sum-in-QAE method, performs the summation of the terms at the same time as the sum over the possible values of the stochastic variable in a single QAE.
We see that, depending on the smoothness of the function and the number of qubits available, either of two methods is better than the other.
In particular, when the function is nonsmooth or we want to save the qubit number, the sum-in-QAE method can be advantageous.

\end{abstract}

\maketitle


\section{\label{sec:intro}Introduction}

Following the recent rapid development of quantum computing technologies, many researchers are now investigating applications of quantum algorithms to practical problems in industries.
The quantum algorithms for Monte Carlo integration (MCI), which we hereafter abbreviate as QMCI, are representative examples \cite{Montanaro,Suzuki,Herbert}.
Based on quantum amplitude estimation (QAE) \cite{Brassard,Suzuki,Aaronson,Grinko,Nakaji,Brown,Kerenidis,Giurgica-Tiron,Tanaka,Uno,Wang}, they output an estimate on an expected value $E[F(S)]$ of a function $F$ of a stochastic variable $S$ with an error up to $\epsilon$, making $O\left(\frac{1}{\epsilon}\right)$ calls to some oracles such as $O_F$, which calculates $F$, and $O_S$, which generate a quantum state corresponding to the probability distribution of $S$ (see Section \ref{sec:QAE} for the detail).
This is often referred to as the {\it quadratic speedup} compared with classical counterparts, which have a query complexity of $O(\epsilon^{-2})$.

One of application targets of QMCI is finance, especially {\it financial derivative pricing} (for readers unfamiliar with this, we refer to \cite{Hull} as a textbook).
A financial derivative is a contract between two parties, in which amounts (payoffs) determined by some underlying asset prices are paid and received.
Roughly speaking, a financial derivative price is given by an expected payoff under a given mathematical finance model for time evolution of underlying asset prices, and calculated typically by MCI in which asset price evolution is simulated.
Large banks, which have large portfolios of financial derivatives, take a lot of time and cost for pricing calculation in daily business, and therefore QMCI is expected to provide a large benefit to financial industry through quantum speedup.

In this paper, we consider how to calculate derivatives\footnote{In this paper, when we simply say a {\it derivative}, it refers to a mathematical terms, that is, a derivative of a function. On the other hand, when we refer to a derivative as a financial product, we use a {\it financial derivative}.} of an expected value with a parameter.
Although there are some previous researches \cite{Jordan,Gilyen,Cornelissen} on quantum algorithms to calculate derivatives of a given function, this paper is the first one focusing on quantum algorithms for derivatives of an expected value, as far as the author knows.
For a function $F(S,x)$ of a stochastic variable $S$ and a real number $x$, its expected value $V(x):=E[F(S,x)]$ with respect to the randomness of $S$ for fixed $x$ can be viewed as a function of $x$.
We often calculate the derivatives of $V(x)$ such as $V^\prime(x)$ and $V^{\prime\prime}(x)$.
For example, for a financial derivative, a bank calculates not only its price but also the derivatives of the price with respect to input variables such as the present underlying asset price and model parameters.
These are called {\it sensitivities} or {\it Greeks}, and have crucial roles for risk management in financial derivative business \cite{Hull}.

In many practical problems, when $V(x)$ does not have a closed formula, neither do its derivatives.
Then, a naive way to calculate a derivative is some {\it finite difference} method such as {\it central difference}:
\begin{equation}
	V'(x)\approx\frac{V(x+h)-V(x-h)}{2h} \label{eq:centIntro}
\end{equation}
where $h$ is some positive real number.
This is the lowest order approximation formula for the first derivative, and there are the higher order approximation formulas for higher order derivatives, which have more terms and use the values of $V$ at a larger number of grid points on the $x$ axis (see Section \ref{sec:numDiff} for the detail). 
Basically, taking smaller $h$ leads to the higher accuracy.
However, when $V$ is calculated by some numerical method such as MCI, there is a subtlety.
Namely, it is not appropriate to calculate $V(x+h)$ and $V(x-h)$ individually and then use (\ref{eq:centIntro}).
This is because the result of a numerical calculation accompanies an error.
Suppose that we have erroneous estimates $\tilde{V}(x+h)=V(x+h)+\epsilon_1$ and $\tilde{V}(x-h)=V(x-h)+\epsilon_2$, where $\epsilon_1$ and $\epsilon_2$ are numerical errors and their absolute values are bounded by $\epsilon$ with high probability.
Plugging these into (\ref{eq:centIntro}) yields
\begin{equation}
	\frac{V(x+h)-V(x-h)}{2h} + \frac{\epsilon_1-\epsilon_2}{2h}.
\end{equation}
Here, the second term is of $O\left(\frac{\epsilon}{h}\right)$.
Divided by a small number $h$, this can be large even if the error level $\epsilon$ is suppressed well.
In other word, to accurately calculate a derivative, we have to suppress an error tremendously so that the dividing factor $h$ is compensated, which results in the large complexity.

In classical MCI, there exist some solutions to this issue \cite{Glasserman}.
For example, in a common way where we use a pseudo-random number sequence on behalf of random variables, using a same seed for $V(x+h)$ and $V(x-h)$ alleviates this difficulty, since $\epsilon_1$ and $\epsilon_2$ become close and cancel each other. 

In the case of QMCI, how can we calculate derivatives?
Contrary to classical MCI, it is difficult to cancel errors in different runs of QMCI, since they are independent by quantum nature. 
Instead, we consider the following way:
\begin{equation}
	V'(x)\approx E\left[\frac{F(S,x+h)-F(S,x-h)}{2h}\right]. \label{eq:solIntro}
\end{equation}
That is, we calculate not the difference quotient of $V$ but the expected value of the difference quotient of $F$ with respect to $x$.
There is only the error from one QMCI for (\ref{eq:solIntro}), and therefore we do not have to concern the aforementioned issue.

However, there are other subtleties in this approach.
First, it is possible that $V$ is smooth but $F$ is nonsmooth or even discontinuous.
This often happens in financial derivative pricing, because nonsmooth payoff functions are ubiquitous in practice, as we will see in Section \ref{sec:problem}.
In such a case, for some $(S,x)$, the smaller $h$ we take, the larger the difference quotient of $F$ becomes.
As explained in Section \ref{sec:QAE}, in QMCI, we normalize the integrand so that it is in the interval $[0,1]$ and encode it into the amplitude of a qubit.
Since the complexity of QMCI becomes larger for the larger normalization factor, the nonsmoothness of $F$ can lead to the large complexity.
Second, even if $F$ is smooth, taking small $h$ for accuracy can cause the issue on the {\it qubit number} as follows.
The smaller $h$ is, the closer $F(S,x+h)$ and $F(S,x-h)$ are.
Whether it is classical or quantum, a computer can perform only the finite precision computation, and, in order to avoid the cancellation of significant digits between $F(S,x+h)$ and $F(S,x-h)$, we have to calculate them with sufficiently high precision.
This means that, for smaller $h$, we use the larger number of qubits to compute $F$.
Although the similar issue on memory size exist also in classical MCI, the severity is higher in QMCI, since the large qubit overhead for error correction might largely limit the number of logical qubits available even in the future \cite{Fowler}.
In particular, for problems like financial derivative pricing, where $F$ is computed through the complicated procedure, this qubit issue is more serious.

Taking into account these points, this paper proposes two quantum methods for numerical differentiation of $V$, and evaluates their complexities in terms of the numbers of queries to the oracles such as $O_F$ and $O_S$, focusing on their dependencies on the error tolerance $\epsilon$.
The first method, which we call the {\it naive iteration method}, simply calls $O_F$ iteratively to calculate the finite difference formula for $F$ term-by-term, and estimates its expected value by QAE.
The second method, which we name the {\it sum-in-QAE method}, utilizes the quantum parallelism more deeply and is more nontrivial.
That is, in this method, we perform the summation of terms in the finite difference formula {\it at the same time} as the sum over the possible values of $S$ in one QMCI.
As we will see below, when $F$ is smooth and we can use so many qubits that we can take sufficiently small $h$, the naive iteration method is better in the aspect of query complexity, since we can use the lowest order difference formula.
On the other hand, when $F$ is nonsmooth, the sum-in-QAE method with the high order formula and large $h$ is appropriate.
Besides, even when $F$ is smooth, if we save the qubit number as much as possible, the sum-in-QAE method can be more advantageous, depending on the parameter measuring the smoothness of $F$, which is introduced in Section \ref{sec:problem}.

The rest of this paper is organized as follows.
In Section \ref{sec:prel} is a preliminary one, which explains the notation we use in this paper, and gives brief reviews on numerical differentiation and QMCI.
Section \ref{sec:qAlgo} is the main part of this paper, where we present the naive iteration method and the sum-in-QAE method, and evaluate and compare their complexities in the various situations on smoothness of $F$ and qubit capacity.
Section \ref{sec:sum} summarizes this paper.

\section{\label{sec:prel}Preliminaries}

\subsection{Notation}

$\mathbb{N}$ denotes the set of all positive integers, and $\mathbb{N}_0:=\mathbb{N}\cup\{0\}$ is the set of all non-negative integers.
For every $x\in \mathbb{R}$, $\mathbb{N}_{\ge x}:=\{i\in\mathbb{N} \ | \ i\ge x\}$ is the set of all positive integers not less than $x$.
For every integer pair $(m,n)$ satisfying $m\le n$, we define $[m:n]:=\{i\in\mathbb{Z} \ | \ m\le i \le n\}$, where $\mathbb{Z}$ is the set of all integers. 

$\mathbb{R}_+$ is the set of all positive real numbers, and $\mathbb{R}_{\ge 0}:=\mathbb{R}_+\cup\{0\}$ is the set of all non-negative real numbers.

We denote the set of all $k$-combinations from a finite set $E$ as $\mathcal{P}_k(E)$, where $k\in[|E|]$.

For given $x\in\mathbb{R}$ and $\epsilon\in\mathbb{R}_+$, we call any $y\in\mathbb{R}$ satisfying $|x-y|\le\epsilon$ a $\epsilon$-approximation of $x$.

For every $x\in\mathbb{R}$, $\ket{x}$ denotes a computational basis state on some quantum register, in which the bit string on the register corresponds to a finite precision binary representation of $x$.


\subsection{\label{sec:numDiff}Numerical differentiation}

Now, let us briefly review numerical differentiation.
It is the method to approximately calculate a derivative of a given real-valued function $f$ on some interval on $\mathbb{R}$ in the case that we can calculate $f$ but not its derivatives directly.
Based on the definition that $f^\prime(x)=\lim_{h\rightarrow 0}\frac{f(x+h)-f(x)}{h}$, we can approximate $f^\prime(x)$ as
\begin{equation}
	f^\prime(x) = \frac{f(x+h)-f(x)}{h} + O(h),
\end{equation}
using a sufficiently small positive real number $h$.
This type of approximation is called the forward difference method.
The residual error term scales as $O(h)$ in this scheme.
However, there is the {\it central difference method}, in which the error scales as $O(h^2)$, and it is therefore used more often:
\begin{equation}
	f^\prime(x) = \frac{f(x+h)-f(x-h)}{2h} + O(h^2). \label{eq:cent1st}
\end{equation} 
Also in this paper, we consider this method.
In fact, (\ref{eq:cent1st}) is the lowest-order approximation in this way.
There are the higher order approximations for higher order derivatives.
The general order formula and its residual term were investigated in \cite{Li}.
Here, we present them as the following theorem, which is same as Corollary 2.1 in \cite{Li} except some slight changes. 
\begin{theorem}[Corollary 2.1 in \cite{Li}, modified]
	Let $m,n$ be positive integers such that $m\le 2n$.
	Let $f$ be a function such that $f\in C^{2n+1}(\mathbb{R})$.
	Then, for any $x\in\mathbb{R}$ and $h\in\mathbb{R}_+$,
	\begin{eqnarray}
		f^{(m)}(x)&=&\mathcal{D}_{n,m,h}[f](x)+R_f(x,n,m,h)h^{2n-m+1} \nonumber \\
		\mathcal{D}_{n,m,h}[f](x)&:=&\frac{1}{h^m}\sum_{j=-n}^{n} d^{(m)}_{n,j}f(x_j)\nonumber \\
		d^{(m)}_{n,j} &:=& 
		\begin{dcases}
			\frac{(-1)^{m-j}m!a^{(m)}_{n,j}}{\left(n+j\right)!\left(n-j\right)!} &; \ {\rm for} \ j\in[-n:n]\setminus\{0\} \\
			-\sum_{j\in[-n:n]\setminus\{0\}} d^{(m)}_{n,h,j}&; \ {\rm for} \ j=0
		\end{dcases}
		\nonumber \\
		R_f(x,n,m,h)&:=& \frac{(-1)^{m+1}m!}{(2n+1)!}\sum_{j\in[-n:n]\setminus\{0\}}\frac{(-1)^jf^{(2n+1)}(\xi_j)j^{2n+1}a^{(m)}_{n,j}}{\left(n+j\right)!\left(n-j\right)!} \nonumber\\
		a^{(m)}_{n,j} &:=& \sum_{\substack{\{l_1,...,l_{2n-m}\}\in \qquad\quad\\ \mathcal{P}_{2n-m}([-n:n]\setminus\{0,j\})}} \prod_{i=1}^{2n-m}l_i \quad {\rm for} \ j\in[-n:n]\setminus\{0\} \nonumber\\
		&&\label{eq:centGen}
	\end{eqnarray}
	holds, where, for every $j\in\left\{-n,-n+1,...,n\right\}$, $x_j:=x+hj$ and $\xi_j$ is some real number depending on $x$ and $x_j$.
	\label{th:numDiff}
\end{theorem}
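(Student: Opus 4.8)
The plan is to combine Taylor's theorem with the exactness of polynomial interpolation on the integer node set $[-n:n]$. First I would expand each shifted value $f(x_j)=f(x+hj)$ by Taylor's theorem with a Lagrange remainder of order $2n+1$,
\begin{equation}
f(x_j)=\sum_{k=0}^{2n}\frac{f^{(k)}(x)}{k!}(hj)^k+\frac{f^{(2n+1)}(\xi_j)}{(2n+1)!}(hj)^{2n+1},
\end{equation}
with $\xi_j$ between $x$ and $x_j$, and substitute this into $\mathcal{D}_{n,m,h}[f](x)=\frac{1}{h^m}\sum_{j=-n}^{n} d^{(m)}_{n,j}f(x_j)$. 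After exchanging the two sums and pulling out powers of $h$, the whole theorem reduces to two facts: (i) the discrete moment identities
\begin{equation}
\sum_{j=-n}^{n}d^{(m)}_{n,j}\,j^k=m!\,\delta_{k,m}\quad\text{for all }k\in[0:2n],
\end{equation}
which force every Taylor term of order below $2n+1$ to vanish except the order-$m$ term, the latter reproducing $f^{(m)}(x)$ exactly; and (ii) the identification of the leftover order-$(2n+1)$ contribution with $R_f(x,n,m,h)h^{2n-m+1}$.

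The main obstacle, and the only genuinely nontrivial step, is establishing (i) together with the closed form of $d^{(m)}_{n,j}$. I would do both at once by recognizing $d^{(m)}_{n,j}$ as $\ell_j^{(m)}(0)$, where $\ell_j(s):=\prod_{i\in[-n:n]\setminus\{j\}}\frac{s-i}{j-i}$ is the Lagrange cardinal polynomial for the nodes $[-n:n]$, obtained after rescaling the interpolation variable by $h$ around $x$ (this rescaling is exactly what produces the $h^{-m}$ prefactor). Since interpolation on $2n+1$ nodes is exact on polynomials of degree at most $2n$, the identity $\sum_j p(j)\ell_j(s)=p(s)$ holds for every such $p$; differentiating $m$ times, setting $s=0$, and taking $p(s)=s^k$ yields precisely the moment identities in (i). To extract the stated closed form I would expand the numerator $\prod_{i\in[-n:n]\setminus\{j\}}(s-i)$ through elementary symmetric polynomials: the coefficient of $s^m$ is, up to sign, $e_{2n-m}$ of the nodes, and because $0$ belongs to the node set this symmetric function is unchanged by also deleting $0$, i.e.\ it equals $a^{(m)}_{n,j}$. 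Evaluating the denominator gives $\prod_{i\neq j}(j-i)=(-1)^{n-j}(n+j)!\,(n-j)!$, and collecting these reproduces $d^{(m)}_{n,j}$ up to the overall sign and factorial factors; the $j=0$ weight is then fixed as $-\sum_{j\neq 0}d^{(m)}_{n,j}$ by the $k=0$ case of (i), which is consistent with the piecewise definition in the statement.

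It remains to handle the remainder, which is routine. Because each $\xi_j$ differs, the order-$(2n+1)$ terms cannot be annihilated by a moment identity; instead I would factor out $h^{2n+1}$ and cancel the $h^{-m}$ to obtain the residual $\frac{h^{2n-m+1}}{(2n+1)!}\sum_{j=-n}^{n}d^{(m)}_{n,j}\,j^{2n+1}f^{(2n+1)}(\xi_j)$. The $j=0$ term drops out through the factor $j^{2n+1}$, and substituting the closed form of $d^{(m)}_{n,j}$ for $j\neq 0$ gives exactly $R_f(x,n,m,h)$. Thus the only place demanding care is the symmetric-function computation of the previous paragraph; everything else is Taylor expansion and bookkeeping of signs and factorials.
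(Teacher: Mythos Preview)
The paper does not supply its own proof of this theorem; it is quoted (with only a change of domain) from \cite{Li} and used as a black box, so there is no in-paper argument to compare against.

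Your proposal is the standard derivation and is correct. Identifying $d^{(m)}_{n,j}$ with $\ell_j^{(m)}(0)$ for the Lagrange cardinal polynomials on the integer nodes $[-n:n]$ immediately yields the moment identities $\sum_{j}d^{(m)}_{n,j}j^{k}=m!\,\delta_{k,m}$ for $0\le k\le 2n$ via exactness of interpolation on degree-$\le 2n$ polynomials, and your observation that the node $0$ lies in $[-n:n]\setminus\{j\}$ so that $e_{2n-m}([-n:n]\setminus\{j\})=a^{(m)}_{n,j}$ is exactly what is needed to extract the closed form. The remainder step is indeed routine. One warning for when you do the sign bookkeeping: the computation gives $\ell_j^{(m)}(0)=\dfrac{(-1)^{m-n+j}m!\,a^{(m)}_{n,j}}{(n+j)!(n-j)!}$, which differs from the displayed formula by a global factor $(-1)^n$ (check $n=m=1$: the stated formula yields $d^{(1)}_{1,1}=-\tfrac12$ rather than $+\tfrac12$). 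This appears to be a transcription slip and is harmless for the rest of the paper, which only uses $|d^{(m)}_{n,j}|$.
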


\noindent This theorem states that the central difference method using the values of $f$ at $2n+1$ points with interval $h$ outputs an approximation of $f^{(m)}$ with an error of $O(h^{2n-m+1})$.

Let us comment on a virtue of central difference in the case of odd $m$, which includes the formula (\ref{eq:cent1st}) for the first derivative.
In this case, $d^{(m)}_{n,0}=0$ holds, as mentioned in \cite{Li}.
In particular, for $n=\left\lceil\frac{m}{2}\right\rceil=\frac{m+1}{2}$, the minimum value of $n$, $d^{(m)}_{\frac{m+1}{2},0}=0$ holds.
That is, although (\ref{eq:centGen}) seemingly requires evaluating $f$ at $2n+1=m+2$ points, we actually need to evaluate $f$ only at $m+1$ points, which is equal to the minimum number of points to calculate $f^{(m)}(x)$ by finite difference formulas.
Nevertheless, the error is $O(h^{2})$.
This is contrasted with other methods such as forward difference method, which uses the $m+1$ values $f(x), f(x+h),...,f(x+mh)$ and gives an estimate of $f^{(m)}(x)$ with an error of $O(h)$.

Let us comment also on the domain of $f$.
Originally, \cite{Li} presented Corollary 2.1 for a function $f$ on some interval on $\mathbb{R}$.
However, considering a bounded domain may make the discussion cumbersome, e.g., we must care whether the points in the numerical differentiation formula are within the domain.
Therefore, for simplicity, we simply consider the case where $f$ is defined and sufficiently smooth on $\mathbb{R}$ in this paper.
Theorem \ref{th:numDiff} has been modified in such a way.
We expect that extending the discussion to the case where the domain is bounded is possible with the essential parts not affected.

\subsection{Quantum amplitude estimation and its application to Monte Carlo integration \label{sec:QAE}}

We now present a brief explanation on QAE \cite{Brassard,Suzuki,Aaronson,Grinko,Nakaji,Brown,Kerenidis,Giurgica-Tiron,Tanaka,Uno,Wang}.
Suppose that we want to solve the following problem: given an oracle $A$ on a system of a quantum register $R_1$ and a qubit $R_2$ such that
\begin{equation}
	A\ket{0}\ket{0}=\sqrt{a}\ket{\psi_1}\ket{1}+\sqrt{1-a}\ket{\psi_0}\ket{0}
\end{equation}
with some $a\in(0,1)$ and some quantum states $\ket{\psi_0}$ and $\ket{\psi_1}$ on $R_1$, estimate $a$, the probability that we obtain $1$ on $R_2$ in $A\ket{0}\ket{0}$, up to an error at most $\epsilon\in\mathbb{R}_+$.
QAE is a quantum algorithm for this.
Making $O\left(\frac{1}{\epsilon}\right)$ calls to $A$ and $A^{\dagger}$, QAE outputs $\epsilon$-approximation of $a$ with a probability higher than a given value (say, 0.99).

There are some applications of QAE, and Monte Carlo integration is one of them \cite{Montanaro,Suzuki,Herbert}.
Suppose that we want to calculate
\begin{equation}
	E[F(S)]=\sum_{s\in\mathcal{S}} p_sF(s) \label{eq:expVal}
\end{equation}
with an error up to $\epsilon\in\mathbb{R}_+$.
Here, $S$ is a stochastic variable which takes an element $s$ in some finite set $\mathcal{S}$ with a probability $p_s\in[0,1]$, and $F$ is a bounded real-valued function on $\mathcal{S}$, which satisfies $|F(s)|\le C$ for any $s\in\mathcal{S}$ with some $C\in\mathbb{R}_+$. 
We assume the availability of the following oracles.
The first one is $O_S$, which generates a quantum state corresponding to the distribution of $S$:
\begin{equation}
	O_S\ket{0}=\sum_{s\in\mathcal{S}} \sqrt{p_s} \ket{s}. \label{eq:OraS}
\end{equation}
Here, we assume that elements in $\mathcal{S}$ are associated with mutually different real numbers, and $\ket{s}$ denotes a computational basis state corresponding to the real number for $s\in\mathcal{S}$.
Note that, in measuring this state, we obtain $s$ with a probability $p_s$.
The second one is $O_{F}$, which calculate $F(s,x)$ for every $(s,x)\in\mathcal{S}\times\mathbb{R}$:
\begin{equation}
	O_{F}\ket{s}\ket{0}=\ket{s}\ket{F(s)}. \label{eq:OraFGen}
\end{equation}
Using these oracles, we can perform the following operation on a three-register system, where the last one is an ancilla qubit:
\begin{eqnarray}
	&&\ket{0}\ket{0}\ket{0} \nonumber \\
	&\rightarrow& \sum_{s\in\mathcal{S}} \sqrt{p_s} \ket{s}\ket{0}\ket{0} \nonumber \\
	&\rightarrow& \sum_{s\in\mathcal{S}} \sqrt{p_s} \ket{s}\ket{F(s)}\ket{0} \nonumber \\
	&\rightarrow& \sum_{s\in\mathcal{S}} \sqrt{p_s} \ket{s}\ket{F(s)}\left(\sqrt{\frac{1}{2}+\frac{F(s)}{2C}}\ket{1}+\sqrt{\frac{1}{2}-\frac{F(s)}{2C}}\ket{0}\right). \nonumber \\
	&& \label{eq:Q}
\end{eqnarray}
Here, we used $O_S$ at the first arrow, $O_{F}$ at the second arrow, and some arithmetic circuits \cite{Vedral,Draper,Cuccaro,Takahashi,Draper2,Takahashi2,AlvarezSanchez,Takahashi3,Thapliyal,Thapliyal2,Jayashree,MunozCoreas,Khosropour,Dibbo,Thapliyal3,MunozCoreas2} and a controlled rotation gate at the third arrow.
Then, the probability to obtain 1 on the ancilla in the last state is
\begin{equation}
	P=\frac{1}{2}+\frac{1}{2C}\sum_{s\in\mathcal{S}} p_sF(s),
\end{equation}
which means that
\begin{equation}
	E[F(S)]=C(2P-1). \label{eq:EP}
\end{equation}
Therefore, we obtain an approximation of $E[F(S)]$ by estimating $P$ using QAE and calculating (\ref{eq:EP}).
If we want to estimate $E[F(S)]$ with an error up to $\epsilon$, it is sufficient to estimate $P$ with an error of $O\left(\frac{\epsilon}{C}\right)$.
This means that the oracle $Q$, which corresponds to the operation (\ref{eq:Q}), is called $O\left(\frac{C}{\epsilon}\right)$ times, and so are $O_S$ and $O_{F}$, since $Q$ contains one each of them. 

Although we assume that the domain $\mathcal{S}$ of $S$ is finite, we often consider unbounded and/or continuous stochastic variables, such as a normal random variable which can take any real number.
Such a case can be boiled down to the above setup by a discrete approximation.
That is, we can set lower and upper bounds for $S$ and grid points between the bounds, and approximate $S$ as a discrete stochastic variable taking any of the grid points.
Actually, there are quantum algorithms for generating a quantum state like (\ref{eq:OraS}) which corresponds to a discretely approximated stochastic variable \cite{Grover,Kaneko}.
Also note that, in quantum computation, the error from discrete approximation can be exponentially suppressed, which means that we can deal with $2^n$ grid points using $n$ qubits.
Hereafter, we consider that (\ref{eq:expVal}) covers the cases where $\mathcal{S}$ is not finite through the discrete approximation, and neglect such an approximation error.

\section{\label{sec:qAlgo}Quantum algorithm for numerical differentiation of expected values}

\subsection{\label{sec:problem} Problem}

Hereafter, we consider the following problem.
As above, let $S$ be a stochastic variable taking an element $s$ in some finite set $\mathcal{S}$ with a probability $p_s\in[0,1]$.
Suppose that we are given a map $F:\mathcal{S}\times \mathbb{R} \rightarrow \mathbb{R}$.
We then define the expected value of $F$ as
\begin{equation}
	V(x):=\sum_{s\in\mathcal{S}} p_sF(s,x) \label{eq:V}
\end{equation}
for every $x\in\mathbb{R}$, and regard this as a real-valued function of a real number $x$, which we call a {\it parameter}.
Now, for given $x$, we want to calculate not only $V(x)$ but also $V^{(m)}(x)$, the $m$-th derivative of $V$ at $x$, where $m$ is a given positive integer.

In fact, this covers pricing and sensitivity calculation for financial derivatives.
In these calculations, paths of time evolution of the underlying asset price are generated by some stochastic variables, and the financial derivative price is given as the expectation value of the payoff determined by the asset price.
As a concrete example, let us consider the following problem.
Consider a financial derivative written on the underlying asset whose price $P_t$ at time $t$ obeys the Black-Scholes model with a volatility $\sigma$ and a risk-free rate $r$.
At its maturity $T$, the payoff $f_{\rm pay}(P_T)$ arises, where $f_{\rm pay}:\mathbb{R}_+\rightarrow\mathbb{R}$ is some function, and $P_T$, the asset price at $T$, is given by
\begin{equation}
	P_T = P_0 \exp\left(\sigma \sqrt{T} S + \left(r-\frac{1}{2}\sigma^2\right)T\right)
\end{equation}
with $P_0\in\mathbb{R}_+$, the asset price at the present $t=0$, and a standard normal random variable $S$.
Then, the present price of this contract is calculated as
\begin{eqnarray}
	&&V(P_0,\sigma,r) = \nonumber \\
	&& \qquad \int_{-\infty}^{+\infty} ds \phi_{\rm SN}(s) f_{\rm pay}\left(P_0 \exp\left(\sigma \sqrt{T} s + \left(r-\frac{1}{2}\sigma^2\right)T\right)\right), \nonumber \\
	&& \label{eq:BSPrice}
\end{eqnarray}
where $\phi_{\rm SN}$ is the density function of the standard normal distribution (see \cite{Hull} for the details).
We can make the price (\ref{eq:BSPrice}) correspond to (\ref{eq:V}), viewing any of $P_0$, $\sigma$, and $r$ as $x$, and discretely approximating $S$.
For example, if $x$ is $P_0$, we can consider that $F(s,x)=f_{\rm pay}\left(x \exp\left(\sigma \sqrt{T} s + \left(r-\frac{1}{2}\sigma^2\right)T\right)\right)$.
This example is one of the simplest problems in practical pricing tasks, and banks often deal with more complicated contracts and use more advanced models.
In such a case, the financial derivative price is often not expressed simply as (\ref{eq:BSPrice}), but calculated by some numerical method such as MCI, where many stochastic variables and parameters are involved.

For a bank, it is important to calculate not only a financial derivative price but also its derivatives.
These are called {\it sensitivities} or {\it Greeks}, and have crucial roles in risk management.
For example, in the above example, the following are representative ones: the first derivatives with respect to $P_0$, $\sigma$ and $r$, which are called the {\it delta}, the {\it vega} and the {\it rho}, respectively, and the second derivatives with respect to $P_0$, which is called the {\it gamma}.

Then, how can we calculate $V^{(m)}(x)$, when $V$ does not have the closed formula and neither do its derivatives?
One way is the central difference method described in Sec. \ref{sec:numDiff}.
That is, choosing $n\in\mathbb{N}_{\ge\frac{m}{2}}$ and $h\in\mathbb{R}_+$, and assuming that $V\in C^{2n+1}(\mathbb{R})$, we can approximate $V^{(m)}(x)$ by
\begin{equation}
	V^{(m)}(x) \approx \mathcal{D}_{n,m,h}[V](x) = \sum_{s\in\mathcal{S}} p_s \mathcal{D}_{n,m,h}[F(s,\cdot)](x), \label{eq:Vm}
\end{equation}
where
\begin{equation}
	\mathcal{D}_{n,m,h}[F(s,\cdot)](x) = \frac{1}{h^m}\sum_{j=-n}^{n} d^{(m)}_{n,j} F(s,x+jh). \label{eq:DF}
\end{equation}
Now, we have the following questions:
\begin{itemize}
	\item can we construct a quantum algorithm to calculate the above numerical differentiation?
	\item what is the best setting, e.g. $n$ and $h$, to reduce complexity keeping the desired accuracy?
\end{itemize}

For such a quantum algorithm, it is plausible to assume the availability of the following oracles.
The first one is $O_S$ mentioned above, which performs the operation (\ref{eq:OraS}).
The second one is $O_{F}$, which is similar to (\ref{eq:OraFGen}), but now calculate $F(s,x)$ for every $(s,x)\in\mathcal{S}\times\mathbb{R}$:
\begin{equation}
	O_{F}\ket{s}\ket{x}\ket{0}=\ket{s}\ket{x}\ket{F(s,x)}. \label{eq:OraF}
\end{equation}
In the context of financial derivative pricing, $O_{F}$ generates an asset price path and computes a payoff.
Hereafter, we consider that $O_{F}$ consumes a longer time and more qubits than $O_S$.
This is the case in some practical problems such as financial derivative pricing under an advanced model, which is the very target of quantum speedup, since the asset price evolution is based on some complicated stochastic differential equation.
In particular, in the QMCI method proposed in \cite{Miyamoto}, in which, for the sake of qubit reduction, we calculate the integrand using pseudo-random numbers (PRNs) sequentially generated on a single register, $O_S$ corresponds to generation of an equiprobable superposition of integer indexes which specify the start point of a PRN sequence and is implemented just by a set of Hadamard gates, whereas $O_F$ contains complicated calculations such as asset price evolution and PRN generation.
We hence focus on $O_{F}$ in the discussion on query complexity and qubit number in the quantum algorithms proposed later.
Practically, it is more plausible to consider the following oracle $O_{F,\epsilon}$ rather than $O_F$.
$O_{F,\epsilon}$ performs the operation
\begin{equation} 
	O_{F,\epsilon} : \ket{s}\ket{x}\ket{0}\mapsto\ket{s}\ket{x}\ket{F_{\epsilon}(s,x)} \label{eq:OFeps}
\end{equation}
for every $(s,x)\in\mathcal{S}\times\mathbb{R}$.
Here, $\epsilon\in\mathbb{R}_+$ and $F_{\epsilon}:\mathcal{S}\times\mathbb{R}\rightarrow\mathbb{R}$ is a function such that
\begin{equation}
	\forall (s,x)\in\mathcal{S}\times\mathbb{R}, |F_{\epsilon}(s,x)-F(s,x)|\le\epsilon, \label{eq:FFepsdiff}
\end{equation}
and uses
\begin{equation}
	O\left(\log^a\left(\frac{1}{\epsilon}\right)\right) \label{eq:qubitOFeps}
\end{equation}
qubits including ancillas, where $a\in\mathbb{R}_+$.
This reflects the fact that we can perform only finite precision computation on a quantum computer and better precision requires more qubits.
For example, if we assume that calculation of $F$ can be decomposed into arithmetic circuits \cite{Vedral,Draper,Cuccaro,Takahashi,Draper2,Takahashi2,AlvarezSanchez,Takahashi3,Thapliyal,Thapliyal2,Jayashree,MunozCoreas,Khosropour,Dibbo,Thapliyal3,MunozCoreas2,Haner}, $n$-bit precision typically requires $O(n)$ qubits, which corresponds to (\ref{eq:qubitOFeps}) with $a=1$.
Of course, similar issues exist also for $O_S$ and other circuits used in QMCI, but we hereafter consider this point only for $O_F$ because of the assumption that it is most costly in terms of qubits.

For a quantitative discussion on accuracy and complexity of algorithms, we need some assumption on derivatives of $V$, which are our targets and affect the residual terms of the formula (\ref{eq:centGen}).
In this paper, following \cite{Cornelissen}, we consider functions with the following property, which are called Gevrey functions.

\begin{definition}
	Let $A,c\in\mathbb{R}_+$ and $\sigma\in\mathbb{R}$. 
	The set of functions $f:\mathbb{R}\rightarrow\mathbb{R}$ such that 
	\begin{equation}
		|f^{(k)}(x)|\le Ac^k (k!)^\sigma \label{eq:Gev}
	\end{equation}
	for any $k\in\mathbb{N}_0$ and $x\in\mathbb{R}$ is denoted by $\mathcal{G}_{A,c,\sigma}$.
	\label{def:Gevrey}
\end{definition}

\noindent Hereafter, we say that a function $f:\mathbb{R}\rightarrow\mathbb{R}$ is {\it smooth} if $f\in\mathcal{G}_{A,c,\sigma}$ with some $A,c\in\mathbb{R}_+$ and $\sigma\in\mathbb{R}$.

Now, let us make some comments on the aforementioned setup.
First, although we assumed that the domain of the second variable (the parameter) of $F$ is $\mathbb{R}$, it might be some subset $U$ of $\mathbb{R}$ in an actual problem.
For example, when we consider financial derivative pricing and regard the parameter $x$ as the initial underlying asset price, $x$ is often limited to $\mathbb{R}_+$.
Besides, unlike \cite{Cornelissen}, Definition \ref{def:Gevrey} refers to functions on $\mathbb{R}$ but not those on any subsets.
These are just because we want to avoid cumbersomeness concerning to boundaries, which is mentioned in Section \ref{sec:numDiff}.
We expect that we can extend the following discussion to the case of more general domains with no modification in essence.
We also mention that we can often set the domain to $\mathbb{R}$ by variable transformation such as $\log x$, which maps $x\in\mathbb{R}_+$ into $\mathbb{R}$.

Second, note that the condition (\ref{eq:Gev}) is slightly different from that in \cite{Cornelissen}.
That is, we introduced a constant $A$, which does not appear in \cite{Cornelissen}.
This $A$ bounds the value of the function $f$ itself, and therefore informally represents the `typical scale' of $f$.
On the other hand, $c^{-1}$ roughly represents the 'scale of variation of $f$'.
In other words, we can informally say that $f(x)$ changes by roughly $A$ when the variable $x$ changes by $c^{-1}$.
This view is helpful especially when the function $f$ and the variable $x$ have different dimensions, e.g., in financial derivative pricing, $f$ as the price is written in the unit such as \$ and \euro, whereas the parameters such as the volatility have different units.  

Third, Definition \ref{def:Gevrey} is different from \cite{Cornelissen} also in that it does not cover differentiation of multivariate functions by multiple variables, whereas \cite{Cornelissen} does.
This is because, in this paper, we focus on differentiation by a single variable for simplicity.
Note that, even for a multivariate function, when we differentiate it with respect to one of its variables, we can treat it as univariate by fixing remaining variables.
We expect that it is straightforward to extend the following discussion to cross partial derivatives using the difference formulas for them.

Finally, let us comment on smoothness of $F$.
One might think that it is more natural to put smoothness conditions on $F(s,x)$ as a function of $x$ than on $V$.
However, in practice, it is possible that $F$ is not smooth but $V$ is.
For example, in the aforementioned financial derivative pricing example, $f_{\rm pay}$ is often nonsmooth or even discontinuous, e.g. $f_{\rm pay}(P)=(P-K)^+$ for a call option and
\begin{equation}
	f_{\rm pay}(P)=
	\begin{cases}
		1 & ; \ {\rm if} \ P\ge K \\
		0 & ; \ {\rm if} \ P<K
	\end{cases} \nonumber
\end{equation}
for a digital option, where $K$ is some real number.
In such a case, $F(s,x)$ is also nonsmooth or discontinuous with respect to $x$.
Nevertheless, $V$ in (\ref{eq:BSPrice}) is smooth with respect to $P_0$, $\sigma$ and $r$, because of its definition as the integral of $\phi_{\rm SN}(s)F(s,x)$ with respect to $s$.
We may concern that, even though the original $V$ is such an integral, we now consider $V$ in (\ref{eq:V}), which is a finite sum arising from the discrete approximation, and therefore $V$ is nonsmooth if so is $F$.
On the other hand, as explained in Section \ref{sec:QAE}, we now consider that the difference between $V$ in (\ref{eq:V}) and the original $V$ is negligible, and therefore so is the difference between the numerical differentiation values of these.
Hence, in the following error analysis of the central difference formula for $V$ in (\ref{eq:V}), we presume that it has the smoothness property as the original $V$.

\subsection{\label{sec:lem} Lemmas}

Here, we present some lemmas for later use.
Lemmas \ref{lem:cj}, \ref{lem:R}, and \ref{lem:cenAppOrd} are proven in Appendices \ref{sec:PrLemR}, \ref{sec:PrLemcj} and \ref{sec:PrLemCenAppOrd}, respectively.

\begin{lemma}
	Let $m$ and $n$ be positive integers satisfying $m\le 2n$.
	Let $f\in C^{2n+1}(\mathbb{R})$ be a function satisfying the following: there exists $M\in\mathbb{R}$ such that
	\begin{equation}
		\left|f^{(2n+1)}(x)\right|\le M \label{eq:f2n1Bound}
	\end{equation}
	holds for any $x\in\mathbb{R}$.
	Then, for any $h\in\mathbb{R}_+$ and any $x\in\mathbb{R}$, $R_f(x,n,m,h)$ given as (\ref{eq:centGen}) satisfies
	\begin{equation}
		|R_f(x,n,m,h)|\le Mm \left(\frac{em}{2}\right)^{2n}. \label{eq:RnUB}
	\end{equation}
	\label{lem:R}
\end{lemma}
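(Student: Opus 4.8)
The plan is to start from the explicit expression for $R_f(x,n,m,h)$ in Theorem \ref{th:numDiff}, apply the triangle inequality, and replace every $f^{(2n+1)}(\xi_j)$ by its uniform bound $M$ from (\ref{eq:f2n1Bound}). Since the $j=0$ term carries the factor $j^{2n+1}=0$, this reduces the claim to the purely arithmetic estimate
\begin{equation}
	|R_f(x,n,m,h)|\le\frac{Mm!}{(2n+1)!}\sum_{j\in[-n:n]\setminus\{0\}}\frac{|j|^{2n+1}\left|a^{(m)}_{n,j}\right|}{(n+j)!(n-j)!}, \nonumber
\end{equation}
so the whole problem becomes bounding the coefficients $a^{(m)}_{n,j}$ and the resulting factorial sum.

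The key step, and the one I expect to be the main obstacle, is a sufficiently tight bound on $\left|a^{(m)}_{n,j}\right|$. The naive estimate, counting the $\binom{2n-1}{2n-m}$ terms and bounding each product by $n^{2n-m}$, is far too lossy: it ignores that the factors $l_1,\dots,l_{2n-m}$ are \emph{distinct} elements of $[-n:n]\setminus\{0,j\}$, and it overshoots the target by a factor of order $e^{2n}$. Instead I would recognize $a^{(m)}_{n,j}$ as the degree-$(2n-m)$ elementary symmetric polynomial $e_{2n-m}$ evaluated on the $(2n-1)$-element set $\{\pm1,\dots,\pm n\}\setminus\{j\}$. Passing to absolute values and invoking the duality $e_{N-r}(\{y_i\})=\left(\prod_i y_i\right)e_r(\{1/y_i\})$ with $N=2n-1$ and $r=m-1$, the product of all absolute values is exactly $(n!)^2/|j|$, while the remaining factor $e_{m-1}$ is a sum of $\binom{2n-1}{m-1}$ products of reciprocals, each at most $1$. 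This yields the clean bound $\left|a^{(m)}_{n,j}\right|\le\frac{(n!)^2}{|j|}\binom{2n-1}{m-1}$, and it is precisely the extraction of the exact top product $(n!)^2/|j|$ that recovers the missing $e^{2n}$.

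Substituting this bound cancels one power of $|j|$, and I would then use the elementary estimates $|j|^{2n}\le n^{2n}$ together with the identity $\sum_{j\in[-n:n]}\frac{1}{(n+j)!(n-j)!}=\frac{1}{(2n)!}\sum_{j}\binom{2n}{n+j}=\frac{2^{2n}}{(2n)!}$, and rewrite $m!\binom{2n-1}{m-1}=m\,\frac{(2n-1)!}{(2n-m)!}$. This leaves
\begin{equation}
	|R_f(x,n,m,h)|\le Mm\,\frac{(2n-1)!}{(2n-m)!}\cdot\frac{(n!)^2\,n^{2n}\,2^{2n}}{(2n+1)!\,(2n)!}. \nonumber
\end{equation}
The prefactor after $\frac{(2n-1)!}{(2n-m)!}$ collapses neatly: using $\binom{2n}{n}\ge 4^n/(2n+1)$, equivalently $(n!)^2/(2n)!\le(2n+1)/2^{2n}$, together with $(2n+1)!\ge(2n+1)(2n/e)^{2n}$, the two factors combine to exactly $e^{2n}/2^{2n}$.

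It then remains only to show $\frac{(2n-1)!}{(2n-m)!}\le m^{2n}$ for $1\le m\le 2n$, after which the bound reads $Mm\,m^{2n}e^{2n}/2^{2n}=Mm(em/2)^{2n}$, as desired. For this last inequality I would bound the left-hand side, a product of $m-1$ integers each below $2n$, by $(2n)^{m-1}$, and verify $(2n)^{m-1}\le m^{2n}$ by checking that $g(m):=2n\log m-(m-1)\log(2n)$ is nonnegative on $[1,2n]$; since $g''(m)=-2n/m^2<0$, the function $g$ is concave and hence lies above the chord joining its endpoints $g(1)=0$ and $g(2n)=\log(2n)\ge0$, so $g\ge0$ throughout. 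This closes the argument modulo the routine Stirling-type estimates flagged above.
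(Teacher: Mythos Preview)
Your proposal is correct and follows essentially the same route as the paper. Both proofs hinge on the same duality trick for $a^{(m)}_{n,j}$ (passing from $(2n-m)$-subsets to complementary $(m-1)$-subsets to extract the factor $(n!)^2/|j|$), then use $|j|^{2n}\le n^{2n}$, the binomial identity $\sum_j\binom{2n}{n+j}=2^{2n}$, and the inequality $(2n)^{m-1}\le m^{2n}$; the only differences are cosmetic---you phrase the duality via elementary symmetric polynomials and bound the reciprocal sum by term-counting $\binom{2n-1}{m-1}$ (the paper bounds it by $\tfrac{(2n)^{m-1}}{(m-1)!}$), you use $\binom{2n}{n}\ge 4^n/(2n+1)$ and $k!\ge(k/e)^k$ where the paper invokes Robbins' two-sided Stirling bound, and you handle $m=1$ uniformly whereas the paper splits it off.
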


\begin{lemma}
	For any positive integers $m$ and $n$ satisfying $m\le 2n$, define
	\begin{equation}
		D^{(m)}_{n}:=\sum_{j=-n}^n \left|d^{(m)}_{n,j}\right|
	\end{equation}
	with $d^{(m)}_{n,j}$ in (\ref{eq:centGen}).
	Then,
	\begin{equation}
		D^{(m)}_n \le 2m\left[2\left(1+\log n\right)\right]^m \label{eq:CnUB}
	\end{equation}
	holds.
	\label{lem:cj}
\end{lemma}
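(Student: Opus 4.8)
The plan is to first strip off the central coefficient, and then to bound each off‑center $\left|d^{(m)}_{n,j}\right|$ by exploiting a duality that converts the high‑degree symmetric polynomial $a^{(m)}_{n,j}$ into a low‑degree one in the reciprocals of the nodes. Since $d^{(m)}_{n,0}=-\sum_{j\in[-n:n]\setminus\{0\}}d^{(m)}_{n,j}$, the triangle inequality gives $\left|d^{(m)}_{n,0}\right|\le\sum_{j\in[-n:n]\setminus\{0\}}\left|d^{(m)}_{n,j}\right|$, whence $D^{(m)}_n\le 2\sum_{j\in[-n:n]\setminus\{0\}}\left|d^{(m)}_{n,j}\right|$. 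This already supplies the outer factor $2$ in (\ref{eq:CnUB}), so it suffices to prove $\sum_{j\in[-n:n]\setminus\{0\}}\left|d^{(m)}_{n,j}\right|\le m\left[2\left(1+\log n\right)\right]^m$.

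The crux, and the step I expect to be the main obstacle, is to recognize $a^{(m)}_{n,j}$ defined in (\ref{eq:centGen}) as the elementary symmetric polynomial $e_{2n-m}$ of the $2n-1$ nonzero reals $T_j:=[-n:n]\setminus\{0,j\}$, and then to apply the reciprocal duality $e_{N-r}(x_1,\dots,x_N)=\bigl(\prod_{i=1}^N x_i\bigr)\,e_r(1/x_1,\dots,1/x_N)$ with $N=2n-1$ and $r=m-1$ (both admissible since $1\le m\le 2n$). Because $\prod_{l\in[-n:n]\setminus\{0\}}l=(-1)^n(n!)^2$, removing the factor $j$ gives $\prod_{l\in T_j}l=(-1)^n(n!)^2/j$, so that
\[
	a^{(m)}_{n,j}=\frac{(-1)^n(n!)^2}{j}\,e_{m-1}\!\left(\left\{\tfrac1l : l\in T_j\right\}\right).
\]
This identity is essential: it replaces a symmetric polynomial of the large degree $2n-m$ by one of the small degree $m-1$, whereas a direct estimate of $e_{2n-m}$ over nodes as large as $n$ would produce a hopelessly large bound.

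The remainder is routine. Substituting into the definition of $d^{(m)}_{n,j}$ yields $\left|d^{(m)}_{n,j}\right|=\dfrac{m!\,(n!)^2}{|j|\,(n+j)!\,(n-j)!}\,\bigl|e_{m-1}(\{1/l\})\bigr|$. I would bound $\dfrac{(n!)^2}{(n+j)!\,(n-j)!}\le 1$, since it is a product of $|j|$ ratios each at most one. For the symmetric polynomial I would pass to absolute values and use $e_k(b)\le\frac{1}{k!}\bigl(\sum_i b_i\bigr)^k$ for nonnegative $b_i$ (each monomial of $e_k$ is counted $k!$ times in the expansion of the power), together with $\sum_{l\in T_j}\frac1{|l|}\le 2H_n\le 2(1+\log n)$, where $H_n$ denotes the $n$-th harmonic number. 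This gives $\left|d^{(m)}_{n,j}\right|\le\frac{m}{|j|}\bigl[2(1+\log n)\bigr]^{m-1}$.

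Finally, summing over $j$ and using $\sum_{j\in[-n:n]\setminus\{0\}}\frac1{|j|}=2H_n\le 2(1+\log n)$ contributes one further factor of $2(1+\log n)$, so that $\sum_{j\in[-n:n]\setminus\{0\}}\left|d^{(m)}_{n,j}\right|\le m\bigl[2(1+\log n)\bigr]^m$. Combined with the factor $2$ from the first step, this establishes (\ref{eq:CnUB}). I expect all estimates after the duality identity to be elementary; the only delicate point is ensuring the index ranges ($0\le m-1\le 2n-1$ and $0\le 2n-m$) under the hypothesis $m\le 2n$, which hold throughout.
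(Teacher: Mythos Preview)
Your argument is correct and matches the paper's proof essentially step for step: the paper also doubles the off-center sum via $d^{(m)}_{n,0}=-\sum_{j\ne 0}d^{(m)}_{n,j}$, rewrites $a^{(m)}_{n,j}$ through the complement of each $(2n-m)$-subset (your ``reciprocal duality'' made explicit), uses $\frac{(n!)^2}{(n+j)!(n-j)!}\le 1$, and finishes with the multinomial/harmonic bound $e_k(b)\le\frac{1}{k!}(\sum b_i)^k$ and $H_n\le 1+\log n$. The only cosmetic differences are that the paper keeps the combinatorial sums in set notation rather than naming $e_{m-1}$, groups the $j$-sum with the $(m-1)$-subset sum into a single $m$-subset sum before bounding, and treats $m=1$ separately, whereas your formulation handles it uniformly via $e_0=1$.
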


\begin{lemma}
	Let $f:\mathbb{R}\rightarrow\mathbb{R}$ be in $\mathcal{G}_{A,c,\sigma}$ for some $A,c\in\mathbb{R}_+$ and $\sigma\in\mathbb{R}$. 
	Then, for any $x\in\mathbb{R}$, $\epsilon\in\mathbb{R}_+$, $m\in\mathbb{N}$, $n\in\mathbb{N}$, and $h\in\mathbb{R}_+$ satisfying
	\begin{equation}
		\epsilon^\prime\le 
		\begin{cases}
			2^{m\sigma^+-\left(\frac{m\sigma^+}{\log 2}\right)^2} &; \ {\rm if} \ m\sigma^+ \ge \log 2 \\
			2^{m\sigma^+-1} &; \ {\rm otherwise}
		\end{cases},
		\label{eq:epsCond}
	\end{equation}
	\begin{equation}
		h\le h_{\rm th}:=\frac{1}{ecm(2n+1)^{\sigma^+}}, \label{eq:hcond}
	\end{equation}
	and
	\begin{equation}
		n\ge n_{\rm th} := \left\lceil\frac{1}{2}\left[\log_2 \left(\frac{2^{m\sigma^+}}{\epsilon^\prime}\right)+\log_2\left(\log_2 \left(\frac{2^{m\sigma^+}}{\epsilon^\prime}\right)\right)-\frac{1}{2}\right]\right\rceil, \label{eq:nth}
	\end{equation}
	where
	\begin{equation}
		\epsilon^\prime := \frac{e}{2(ecm)^m}\cdot\frac{\epsilon}{A}, \label{eq:epsprime}
	\end{equation}
	the following holds:
	\begin{equation}
		\left|f^{(m)}(x)-\mathcal{D}_{n,m,h}[f](x)\right|\le \epsilon. \label{eq:diffErrUB}
	\end{equation}
	
	\label{lem:cenAppOrd}
\end{lemma}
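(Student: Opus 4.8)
The plan is to turn the exact error identity of Theorem \ref{th:numDiff} into an explicit bound and then optimize over the free parameters. By that theorem, $f^{(m)}(x)-\mathcal{D}_{n,m,h}[f](x)=R_f(x,n,m,h)\,h^{2n-m+1}$, so the whole task is to bound $|R_f|\,h^{2n-m+1}$. To invoke Lemma \ref{lem:R} I need a uniform bound on $|f^{(2n+1)}|$; the Gevrey property (Definition \ref{def:Gevrey}) gives $|f^{(2n+1)}(x)|\le A c^{2n+1}\bigl((2n+1)!\bigr)^{\sigma}$, and since $(2n+1)!\ge 1$ I may replace $\sigma$ by $\sigma^+:=\max\{\sigma,0\}$ and take $M=A c^{2n+1}\bigl((2n+1)!\bigr)^{\sigma^+}$. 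Feeding this into (\ref{eq:RnUB}) yields
\[ \bigl|f^{(m)}(x)-\mathcal{D}_{n,m,h}[f](x)\bigr|\le A m\Bigl(\tfrac{em}{2}\Bigr)^{2n} c^{2n+1}\bigl((2n+1)!\bigr)^{\sigma^+} h^{2n-m+1}. \]

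Next I would exploit the constraints on $h$ and $n$. The right-hand side above is increasing in $h$ because $2n-m+1\ge 1$, so by (\ref{eq:hcond}) it is largest at $h=h_{\rm th}=\bigl[ecm(2n+1)^{\sigma^+}\bigr]^{-1}$, and it suffices to bound it there. Substituting $h_{\rm th}$, the powers of $c$, $e$ and $m$ all collapse to exponent $m$, leaving a factor $(ecm)^m$; combined with the definition (\ref{eq:epsprime}) of $\epsilon'$ this gives
\[ \bigl|f^{(m)}(x)-\mathcal{D}_{n,m,h}[f](x)\bigr|\le \frac{\epsilon}{2\epsilon'}\,2^{-2n}\left[\frac{(2n+1)!}{(2n+1)^{2n-m+1}}\right]^{\sigma^+}. \]
The factor $(2n+1)^{\sigma^+}$ deliberately built into $h_{\rm th}$ is exactly what cuts the factorial's $(2n+1)^{(2n+1)\sigma^+}$ growth down to a mild polynomial. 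Thus proving (\ref{eq:diffErrUB}) reduces to showing $2^{-2n}\bigl[(2n+1)!/(2n+1)^{2n-m+1}\bigr]^{\sigma^+}\le 2\epsilon'$.

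For the factorial ratio I would use a Stirling-type bound such as $(2n+1)!\le (2n+1)^{2n+2}e^{-2n}$, giving $\bigl[(2n+1)!/(2n+1)^{2n-m+1}\bigr]^{\sigma^+}\le (2n+1)^{(m+1)\sigma^+}e^{-2n\sigma^+}$. Retaining the exponential factor $e^{-2n\sigma^+}$ is essential: the cruder bound $(2n+1)!\le (2n+1)^{2n+1}$ discards it and is genuinely too weak (one can check it fails already for moderately large $m\sigma^+$). Taking logarithms, the target becomes a transcendental inequality of the form
\[ 2n(\ln 2+\sigma^+)\ \ge\ (m+1)\sigma^+\ln(2n+1)+\ln\tfrac{1}{2\epsilon'}, \]
in which $n$ appears both linearly and inside a logarithm.

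The hard part will be verifying that the explicit threshold (\ref{eq:nth}) clears this inequality. Since its left-hand side is eventually increasing in $n$, it is enough to check $n=n_{\rm th}$ and then invoke monotonicity for all $n\ge n_{\rm th}$. The closed form $n_{\rm th}=\bigl\lceil\tfrac12\bigl(y+\log_2 y-\tfrac12\bigr)\bigr\rceil$ with $y=\log_2\!\bigl(2^{m\sigma^+}/\epsilon'\bigr)$ is recognizable as one fixed-point (Lambert-$W$-type) iterate of this inequality, so I would substitute it and bound the residual. Here the two-case hypothesis (\ref{eq:epsCond}) does the decisive work: it forces $y$ to be large---at least $(m\sigma^+/\log 2)^2$ when $m\sigma^+\ge\log 2$, and at least $1$ otherwise---so that the $\log_2 y$ correction in $n_{\rm th}$ dominates the $\sigma^+$-weighted polynomial term $(m+1)\sigma^+\ln(2n+1)$ with room to spare. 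I expect the delicate bookkeeping---tracking the ceiling, the constant $-\tfrac12$, and the discrepancy between $\ln(2n+1)$ and $\ln y$ across the two regimes of $m\sigma^+$---to be the only real obstacle, the reduction leading up to it being essentially mechanical.
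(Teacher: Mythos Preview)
Your overall plan---Theorem~\ref{th:numDiff} for the error identity, Lemma~\ref{lem:R} with the Gevrey bound $M=Ac^{2n+1}((2n+1)!)^{\sigma}$ for $|R_f|$, substitute $h=h_{\rm th}$, and reduce to an inequality in $n$---is exactly the paper's route. The divergence is at the factorial step, and there your assertion is wrong: the paper uses precisely the ``crude'' bound you dismiss, namely $((2n+1)!)^{\sigma}\le (2n+1)^{(2n+1)\sigma^+}$. With that bound, the error factors as
\[
\frac{2A}{e}\left(\frac{ecm(2n+1)^{\sigma^+}h}{2}\right)^{2n+1}h^{-m}
\le \frac{2A(ecm)^m}{e}\cdot\frac{(2n+1)^{m\sigma^+}}{2^{2n+1}},
\]
so the whole problem collapses to the single clean inequality $(2n+1)^{m\sigma^+}/2^{2n+1}\le\epsilon'$. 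The paper then isolates this as an auxiliary lemma of the form ``$x^a/2^x\le\epsilon$ whenever $x\ge\log_2(2^a/\epsilon)+a\log_2\log_2(2^a/\epsilon)$, provided $\epsilon$ satisfies (\ref{eq:epsCond})''; condition (\ref{eq:epsCond}) and the threshold (\ref{eq:nth}) are calibrated exactly to this lemma, and the ``hard part'' you anticipate is dispatched by invoking it with $x=2n+1$ and $a=m\sigma^+$.

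Your Stirling detour is therefore unnecessary. The extra factor $e^{-2n\sigma^+}$ you retain is not what makes the argument go through; it merely shifts the target to $2n(\ln 2+\sigma^+)\ge(m{+}1)\sigma^+\ln(2n+1)+\ln\tfrac{1}{2\epsilon'}$, which has a \emph{worse} polynomial exponent $(m{+}1)\sigma^+$ than the paper's $m\sigma^+$ and a different linear coefficient, and you would still have to prove that (\ref{eq:nth}) clears it---which you have not done. Replacing your Stirling step by the crude bound gives a shorter and complete argument.
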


\subsection{\label{sec:algoNAive} The naive iteration method}

Now, let us consider quantum methods to compute $V^{(m)}(x)$ by numerical differentiation.
Naively thinking, we conceive the following method, which we hereafter call the {\it naive iteration method}.
That is, taking appropriate $n\in\mathbb{N}_{\ge \frac{m}{2}}$ and $h\in\mathbb{R}_+$ and supposing that we are given an oracle $O_F$ in (\ref{eq:OraF}) (in reality, $O_{F,\epsilon}$ in (\ref{eq:OFeps})), we construct an oracle which computes $\mathcal{D}_{n,m,h}[F(s,\cdot)](x)$ in (\ref{eq:DF}) by iteratively calling $O_F$ for $x+(-n)h,x+(-n+1)h,...,x+nh$.
Then, using QAE with this oracle, we can estimate $\mathcal{D}_{n,m,h}[V](x)$ in (\ref{eq:Vm}) as an approximation of $V^{(m)}(x)$.

However, there is the following subtlety in this way.
As mentioned above, it is common that $F$ is nonsmooth, and therefore $\mathcal{D}_{n,m,h}[F(s,\cdot)](x)$ is not bounded for small $h$.
For example, if $F(s,x)$ has a discontinuity with respect to $x$ at some $(s^\prime,x^\prime)\in\mathcal{S}\times \mathbb{R}$, $\mathcal{D}_{1,1,h}[F(s^\prime,\cdot)](x)=\frac{1}{2h}\left(F(s^\prime,x^\prime+h)-F(s^\prime,x^\prime-h)\right)$ diverges when $h\rightarrow0$.
On the other hand, as explained in Section \ref{sec:QAE}, in QMCI, we need to take some upper bound on the absolute values of the integrand and normalize it with the bound, in order to encode an integrand value into an amplitude of an ancilla qubit. 
This provides the difference between the cases where $F$ is smooth and nonsmooth.

\subsubsection{The case of a smooth integrand}

First, we consider the smooth integrand case.
In this case, $\mathcal{D}_{n,m,h}[F(s,\cdot)](x)$ is bounded, and therefore we can normalize it, and then estimate its expected value by QAE.
We give the formal statement on the quantum algorithm in this case as follows, presenting the concrete calculation procedure in the proof.

\begin{theorem}
	Let $\mathcal{S}$ be some finite set such that each $s\in\mathcal{S}$ is associated with $p_s\in[0,1]$ satisfying $\sum_{s\in\mathcal{S}}p_s =1$.
	Let $F$ be a real-valued function on $\mathcal{S}\times\mathbb{R}$ satisfying the following conditions:
	\begin{enumerate}
		\renewcommand{\labelenumi}{(\roman{enumi})}
		\item there exists $B\in\mathbb{R}$ such that $|F(s,x)|\le B$ holds for every $(s,x)\in\mathcal{S}\times \mathbb{R}$,
		\item there exist $A,c\in\mathbb{R}_+$ and $\sigma\in\mathbb{R}$ such that, for every $s\in\mathcal{S}$, $F(s,x)$ as a function of $x$ is in $\mathcal{G}_{A,c,\sigma}$.
	\end{enumerate}
	Suppose that we have an access to an oracle $O_S$, which performs the operation (\ref{eq:OraS}).
	Suppose that, for any given $\epsilon\in\mathbb{R}_+$, we have an access to an oracle $O_{F,\epsilon}$, which performs the operation (\ref{eq:OFeps}) for every $(s,x)\in\mathcal{S}\times\mathbb{R}$ with some function $F_{\epsilon}:\mathcal{S}\times \mathbb{R}\rightarrow\mathbb{R}$ satisfying (\ref{eq:FFepsdiff}) and uses qubits at most (\ref{eq:qubitOFeps}) with $a\in\mathbb{R}_+$.
	Suppose that we are given $x\in\mathbb{R}$, $m\in\mathbb{N}$ and $\epsilon\in\mathbb{R}_+$.
	Then, for any $n\in\mathbb{N}_{\ge \frac{m}{2}}$ and $h\in\mathbb{R}_+$ satisfying
	\begin{equation}
		Ac^{2n+1}((2n+1)!)^\sigma m \left(\frac{em}{2}\right)^{2n} h^{2n-m+1} \le \epsilon, \label{eq:hthSumInOra}
	\end{equation}
	there is a quantum algorithm $\mathcal{A}_1(m,\epsilon;n,h)$, which outputs $3\epsilon$-approximation of $V^{(m)}(x)$ with probability at least 0.99, making
	\begin{equation}
		O\left(\frac{Ac^m (m!)^\sigma}{\epsilon}\right) \label{eq:numOraSumInOracleS_SmF}
	\end{equation}
	calls to $O_S$ and
	\begin{equation}
		O\left(\frac{Ac^m (m!)^\sigma n}{\epsilon}\right)\label{eq:numOraSumInOracleF_SmF}
	\end{equation}
	calls to $O_{F,\tilde{\epsilon}}$ and using	qubits at most
	\begin{equation}
		O\left(\log^a\left(\frac{m(2(1+\log n))^m}{h^m\epsilon}\right)\right) \label{eq:qubitOraSumInOracle}
	\end{equation}
	for $O_{F,\tilde{\epsilon}}$.
	Here,
	\begin{equation}
		\tilde{\epsilon} := \frac{h^m\epsilon}{D^{(m)}_n}. \label{eq:epstil}
	\end{equation}
	In particular, $\mathcal{A}_1\left(m,\epsilon;\left\lceil \frac{m}{2} \right\rceil,h_{\rm min}\right)$, where
	\begin{equation}
		h_{\rm min} := 
		\begin{dcases}
			\left(\frac{\epsilon}{Ac((m+2)!)^\sigma m}\left(\frac{2}{ecm}\right)^{m+1}\right)^{1/2} & ; \ {\rm if} \ m \ {\rm is \ odd} \\
			\frac{\epsilon}{Ac((m+1)!)^\sigma m}\left(\frac{2}{ecm}\right)^{m} & ; \ {\rm if} \ m \ {\rm is \ even}
		\end{dcases}
		, \label{eq:h1}
	\end{equation}
	makes
	\begin{equation}
		O\left(\frac{Ac^m (m!)^\sigma m}{\epsilon}\right)\label{eq:numOraSumInOracleF_SmF_case1}
	\end{equation}
	calls to $O_{F,\tilde{\epsilon}}$ and uses
	\begin{fleqn}[-25pt]
		\begin{equation}
			\begin{dcases}
			O\left(\log^a\left(\frac{e^{\frac{m^2+m}{2}}m^{\frac{1}{2}m^2+m+1}A^{\frac{m}{2}}c^{\frac{1}{2}m^2+m}((m+2)!)^{\frac{m\sigma}{2}}\left[\left(1+\log \left(\frac{m+1}{2}\right)\right)\right]^m}{2^{\frac{m^2-m}{2}}\epsilon^{\frac{m}{2}+1}}\right)\right) \\
			 \qquad\qquad\qquad\qquad\qquad\qquad\qquad\qquad\qquad\qquad\qquad ; \ {\rm if} \ m \ {\rm is \ odd} \\
			O\left(\log^a\left(\frac{e^{m^2}m^{m^2+m+1}A^mc^{m^2+m}((m+1)!)^{m\sigma}\left[\left(1+\log \left(\frac{m}{2} \right)\right)\right]^m}{2^{m^2}\epsilon^{m+1}}\right)\right) \\
			 \qquad\qquad\qquad\qquad\qquad\qquad\qquad\qquad\qquad\qquad\qquad; \ {\rm if} \ m \ {\rm is \ even}
			\end{dcases}
			 \label{eq:qubitOraSumInOracle_case1}
		\end{equation}
	\end{fleqn}
	qubits for $O_{F,\tilde{\epsilon}}$, and $\mathcal{A}_1(m,\epsilon;n_{\rm th},h_{\rm th})$, where $n_{\rm th}$ and $h_{\rm th}$ are given as (\ref{eq:nth}) and (\ref{eq:hcond}) respectively, makes
	\begin{equation}
		O\left(\frac{Ac^m (m!)^\sigma m}{\epsilon}\log_2\left(\frac{2^{m\sigma^+}}{\epsilon^\prime}\right)\right) \label{eq:numOraSumInOracleF_SmF_case2}
	\end{equation}
	calls to $O_{F,\tilde{\epsilon}}$ and uses
	\begin{equation}
		O\left(\log^a\left(\frac{m(2ecm)^mB}{\epsilon}\right)\right) \label{eq:qubitOraSumInOracle_case2}
	\end{equation}
	qubits for $O_{F,\tilde{\epsilon}}$, where $\epsilon^\prime$ is given as (\ref{eq:epsprime}).
	\label{th:InOracleSmF} 
\end{theorem}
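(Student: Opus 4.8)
The plan is to realize the right-hand side of (\ref{eq:Vm}) as an ordinary QMCI problem in which the role of the integrand $F(s)$ in Section \ref{sec:QAE} is played by the difference quotient $\mathcal{D}_{n,m,h}[F(s,\cdot)](x)$ of (\ref{eq:DF}). First I would construct, out of $2n+1$ sequential calls to $O_{F,\tilde\epsilon}$ at the grid points $x_{-n},\dots,x_n$ together with arithmetic circuits that multiply by the classically precomputed coefficients $d^{(m)}_{n,j}$, accumulate the sum, and divide by $h^m$, a single oracle that maps $\ket{s}\ket{0}\mapsto\ket{s}\ket{\widetilde{\mathcal{D}}(s)}$, where $\widetilde{\mathcal{D}}(s)$ is a finite-precision value of $\mathcal{D}_{n,m,h}[F_{\tilde\epsilon}(s,\cdot)](x)$. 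Feeding this oracle, together with $O_S$, into the QMCI pipeline (\ref{eq:Q})--(\ref{eq:EP}) and running QAE then produces an estimate of $E[\widetilde{\mathcal{D}}(S)]$, which I claim is a $3\epsilon$-approximation of $V^{(m)}(x)$.

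The key quantitative step is to bound the normalization factor $C$ needed to encode $\widetilde{\mathcal{D}}(s)$ into an amplitude, since the QAE cost is $O(C/\epsilon)$. Applying Theorem \ref{th:numDiff} and Lemma \ref{lem:R} to $f=F(s,\cdot)$, whose $(2n+1)$-th derivative is bounded uniformly in $s$ by $M=Ac^{2n+1}((2n+1)!)^\sigma$ thanks to condition (ii), gives $|\mathcal{D}_{n,m,h}[F(s,\cdot)](x)-F^{(m)}(s,x)|\le\epsilon$ precisely under hypothesis (\ref{eq:hthSumInOra}); combined with the Gevrey bound $|F^{(m)}(s,x)|\le Ac^m(m!)^\sigma$ this yields the uniform bound $C=O(Ac^m(m!)^\sigma)$. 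Hence QAE makes $O(Ac^m(m!)^\sigma/\epsilon)$ calls to $O_S$, matching (\ref{eq:numOraSumInOracleS_SmF}), and the same number of calls to the iteration oracle, each of which invokes $O_{F,\tilde\epsilon}$ exactly $2n+1=O(n)$ times, giving (\ref{eq:numOraSumInOracleF_SmF}).

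It then remains to track the three independent error contributions and the qubit budget. The discretization error $|V^{(m)}(x)-\mathcal{D}_{n,m,h}[V](x)|\le\epsilon$ follows by applying the same estimate to $V$, which lies in $\mathcal{G}_{A,c,\sigma}$ since it is a convex combination of the $F(s,\cdot)$. The finite-precision error is controlled by propagating (\ref{eq:FFepsdiff}) through the coefficients: the per-$s$ error is at most $h^{-m}\sum_j|d^{(m)}_{n,j}|\tilde\epsilon=(D^{(m)}_n/h^m)\tilde\epsilon$, which equals $\epsilon$ for the choice (\ref{eq:epstil}), and taking the expectation preserves this bound. Adding the $\epsilon$-level QAE error gives the claimed $3\epsilon$. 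The qubit count (\ref{eq:qubitOraSumInOracle}) is obtained by inserting $\tilde\epsilon$ into (\ref{eq:qubitOFeps}) and bounding $D^{(m)}_n$ via Lemma \ref{lem:cj}.

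Finally, the two explicit instances are pure substitutions. For $\mathcal{A}_1(m,\epsilon;\lceil m/2\rceil,h_{\rm min})$ I would take $n=\lceil m/2\rceil$, so that $2n-m+1$ equals $2$ for odd $m$ and $1$ for even $m$, and solve (\ref{eq:hthSumInOra}) with equality for $h$ to read off (\ref{eq:h1}); the query bound (\ref{eq:numOraSumInOracleF_SmF_case1}) follows from $n=O(m)$, and (\ref{eq:qubitOraSumInOracle_case1}) from simplifying (\ref{eq:qubitOraSumInOracle}). For $\mathcal{A}_1(m,\epsilon;n_{\rm th},h_{\rm th})$ I would instead invoke Lemma \ref{lem:cenAppOrd} directly to guarantee the discretization error, so that $n_{\rm th}=O(\log(2^{m\sigma^+}/\epsilon^\prime))$ produces (\ref{eq:numOraSumInOracleF_SmF_case2}) and substituting $h_{\rm th}$ yields (\ref{eq:qubitOraSumInOracle_case2}) after absorbing subdominant polylogarithmic factors into the $O(\cdot)$. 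I expect the main obstacle to be the first of these steps: securing a uniform-in-$s$, $h$-independent bound $C=O(Ac^m(m!)^\sigma)$ on the difference quotient rather than the naive $O(D^{(m)}_n/h^m)$ bound, since it is exactly this cancellation, made possible by smoothness of $F$ and condition (\ref{eq:hthSumInOra}), that prevents the $h^{-m}$ blow-up and keeps the query complexity free of $h$.
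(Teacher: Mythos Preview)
Your proposal is correct and follows essentially the same route as the paper's proof: build the iteration oracle $Q_1$ from $2n+1$ sequential calls to $O_{F,\tilde\epsilon}$, use smoothness via Theorem~\ref{th:numDiff} and Lemma~\ref{lem:R} together with hypothesis~(\ref{eq:hthSumInOra}) and the Gevrey bound to get the $h$-independent normalization $Ac^m(m!)^\sigma+2\epsilon$, run QAE, and split the total error into discretization, finite-precision (controlled by the choice~(\ref{eq:epstil}) with Lemma~\ref{lem:cj}), and QAE contributions of $\epsilon$ each. The two specific instances are handled in the paper exactly as you describe, by direct substitution, with Lemma~\ref{lem:cenAppOrd} (more precisely, the computation in its proof) providing the verification of~(\ref{eq:hthSumInOra}) at $(n_{\rm th},h_{\rm th})$.
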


\begin{proof}
	We first present the algorithm, and then consider the accuracy and the complexity.\\
	
	\noindent \textbf{Algorithm}
	
	Consider a system consisting of five quantum registers $R_1,...,R_5$ and some ancillary registers as necessary.
	$R_1$ to $R_4$ have sufficient numbers of qubits, whereas $R_5$ has a single qubit.
		We can perform the following operation to the system initialized to $\ket{0}\ket{0}\ket{0}\ket{0}\ket{0}$:\\
	
	\begin{algorithm}[H]
		\caption{}
		\label{proc1}
		\begin{algorithmic}[1] 
			\STATE Using $O_S$, generate $\sum_{s\in\mathcal{S}} \sqrt{p_s} \ket{s}$ on $R_1$. 
			\FORALL {$i\in\mathcal{J}^{\ne0}_{n,m}:=\{j\in[-n:n] \ | \ d^{(m)}_{n,j}\ne0 \}$}
			\STATE Set $R_2$ to $\ket{x+ih}$.
			\STATE By $O_{F,\tilde{\epsilon}}$, compute $F_{\tilde{\epsilon}}(s,x+ih)$ onto $R_3$, using the values on $R_1$ and $R_2$ as inputs.
			\STATE By a multiplier circuit (e.g. \cite{AlvarezSanchez,Jayashree,MunozCoreas}), add the product of $d^{(m)}_{n,i}$ and the value on $R_3$ to $R_4$.
			\STATE By $O_{F,\tilde{\epsilon}}^{-1}$, uncompute $R_3$ to $\ket{0}$.
			\ENDFOR
			\STATE By arithmetic circuits and a controlled rotation gate, transform the state on $R_5$ to 
			\begin{equation}
				\sqrt{\frac{1}{2}+\frac{X}{2h^m(Ac^m (m!)^\sigma+2\epsilon)}}\Ket{1}+\sqrt{\frac{1}{2}-\frac{X}{2h^m(Ac^m (m!)^\sigma+2\epsilon)}}\Ket{0}.
			\end{equation}
			using the value on $R_4$ as $X$. 
		\end{algorithmic}
	\end{algorithm}
	We denote the oracle that corresponds to this operation as $Q_1$.
	In this operation, the quantum state is transformed as follows:
	\clearpage
	\begin{widetext}
		
		\begin{eqnarray}
			&&\ket{0}\ket{0}\ket{0}\ket{0}\ket{0} \nonumber\\
			&\xrightarrow{1}& \sum_{s\in\mathcal{S}} \sqrt{p_s} \ket{s}\ket{0}\ket{0}\ket{0}\ket{0} \nonumber \\
			&\xrightarrow{3 \ {\rm for} \ i=-n}& \sum_{s\in\mathcal{S}} \sqrt{p_s} \ket{s}\ket{x+(-n)h}\ket{0}\ket{0}\ket{0} \nonumber \\
			&\xrightarrow{4 \ {\rm for} \ i=-n}& \sum_{s\in\mathcal{S}} \sqrt{p_s} \ket{s}\ket{x+(-n)h}\ket{F_{\tilde{\epsilon}}(s,x+(-n)h)}\ket{0}\ket{0} \nonumber \\
			&\xrightarrow{5 \ {\rm for} \ i=-n}& \sum_{s\in\mathcal{S}} \sqrt{p_s} \ket{s}\ket{x+(-n)h}\ket{F_{\tilde{\epsilon}}(s,x+(-n)h)}\ket{d^{(m)}_{n,-n}F_{\tilde{\epsilon}}(s,x+(-n)h)}\ket{0} \nonumber \\
			&\xrightarrow{6 \ {\rm for} \ i=-n}& \sum_{s\in\mathcal{S}} \sqrt{p_s} \ket{s}\ket{x+(-n)h}\ket{0}\ket{d^{(m)}_{n,-n}F_{\tilde{\epsilon}}(s,x+(-n)h)}\ket{0} \nonumber \\
			&\xrightarrow{3 \ {\rm for} \ i=-n+1}& \sum_{s\in\mathcal{S}} \sqrt{p_s} \ket{s}\ket{x+(-n+1)h}\ket{0}\ket{d^{(m)}_{n,-n}F_{\tilde{\epsilon}}(s,x+(-n)h)}\ket{0} \nonumber \\
			&\xrightarrow{4 \ {\rm for} \ i=-n+1}& \sum_{s\in\mathcal{S}} \sqrt{p_s} \ket{s}\ket{x+(-n+1)h}\ket{F_{\tilde{\epsilon}}(s,x+(-n+1)h)}\ket{d^{(m)}_{n,-n}F_{\tilde{\epsilon}}(s,x+(-n)h)}\ket{0} \nonumber \\
			&\xrightarrow{5 \ {\rm for} \ i=-n+1}& \sum_{s\in\mathcal{S}} \sqrt{p_s} \ket{s}\ket{x+(-n+1)h}\ket{F_{\tilde{\epsilon}}(s,x+(-n+1)h)}\Ket{\sum_{j=-n}^{-n+1}d^{(m)}_{n,j}F_{\tilde{\epsilon}}(s,x+jh)}\ket{0} \nonumber \\
			&\xrightarrow{6 \ {\rm for} \ i=-n+1}& \sum_{s\in\mathcal{S}} \sqrt{p_s} \ket{s}\ket{x+(-n+1)h}\ket{0}\Ket{\sum_{j=-n}^{-n+1}d^{(m)}_{n,j}F_{\tilde{\epsilon}}(s,x+jh)}\ket{0} \nonumber \\
			&\rightarrow& \cdots \nonumber \\
			&\xrightarrow{6 \ {\rm for} \ i=n}& \sum_{s\in\mathcal{S}} \sqrt{p_s} \ket{s}\ket{x+nh}\ket{0}\Ket{\sum_{j=-n}^{n}d^{(m)}_{n,j}F_{\tilde{\epsilon}}(s,x+jh)}\ket{0} \nonumber \\
			&\xrightarrow{8}& \sum_{s\in\mathcal{S}} \sqrt{p_s} \ket{s}\ket{x+nh}\ket{0}\Ket{\sum_{j=-n}^{n}d^{(m)}_{n,j}F_{\tilde{\epsilon}}(s,x+jh)}\nonumber \\
			&&\qquad \ \ \otimes\left(\sqrt{\frac{1}{2}+\frac{1}{2h^m(Ac^m (m!)^\sigma+2\epsilon)}\sum_{j=-n}^{n}d^{(m)}_{n,j}F_{\tilde{\epsilon}}(s,x+jh)}\Ket{1}+\sqrt{\frac{1}{2}-\frac{1}{2h^m(Ac^m (m!)^\sigma+2\epsilon)}\sum_{j=-n}^{n}d^{(m)}_{n,j}F_{\tilde{\epsilon}}(s,x+jh)}\Ket{0}\right)\nonumber \\
			&=:&\ket{\Psi_1}, \label{eq:transfSumInOracle_SmF}
		\end{eqnarray}
	\end{widetext}
	Note that the insides of the square roots in the last line in (\ref{eq:transfSumInOracle_SmF}) are in $[0,1]$, since
	\begin{fleqn}[-10pt]
	\begin{eqnarray}
		&&\left|\frac{1}{h^m(Ac^m (m!)^\sigma+2\epsilon)}\sum_{j=-n}^{n}d^{(m)}_{n,j}F_{\tilde{\epsilon}}(s,x+jh)\right| \nonumber \\
		&\le& \left|\frac{1}{h^m(Ac^m (m!)^\sigma+2\epsilon)}\sum_{j=-n}^{n}d^{(m)}_{n,j}F(s,x+jh)\right| + \nonumber \\
		&&\quad \left|\frac{1}{h^m(Ac^m (m!)^\sigma+2\epsilon)}\sum_{j=-n}^{n}d^{(m)}_{n,j}\left(F(s,x+jh)-F_{\tilde{\epsilon}}(s,x+jh)\right)\right|  \nonumber \\
		&\le& \frac{\left|\mathcal{D}_{n,m,h}[F(s,\cdot)](x)\right|}{Ac^m (m!)^\sigma+2\epsilon}+ \nonumber \\
		&& \quad \frac{1}{h^m(Ac^m (m!)^\sigma+2\epsilon)} \sum_{j=-n}^{n}\left|d^{(m)}_{n,j}\right|\cdot\left|F(s,x+jh)-F_{\tilde{\epsilon}}(s,x+jh)\right|   \nonumber \\
		&\le& \frac{Ac^m (m!)^\sigma+\epsilon}{Ac^m (m!)^\sigma+2\epsilon} + \frac{1}{h^m(Ac^m (m!)^\sigma+2\epsilon)} D^{(m)}_n \tilde{\epsilon} \nonumber \\
		&=& 1.
	\end{eqnarray}
	\end{fleqn}
	Here, at the third inequality, we used
	\begin{eqnarray}
		&&\left|\mathcal{D}_{n,m,h}[F(s,\cdot)](x)\right| \nonumber \\
		&\le& \left|\frac{\partial^m F(s,x)}{\partial x^m}\right|+\left|\mathcal{D}_{n,m,h}[F(s,\cdot)](x)-\frac{\partial^m F(s,x)}{\partial x^m}\right|\nonumber \\
		&\le& Ac^m (m!)^\sigma+\epsilon, \nonumber
	\end{eqnarray}
	which follows from $F(s,\cdot)\in\mathcal{G}_{A,c,\sigma}$, Lemma \ref{lem:R} and (\ref{eq:hthSumInOra}).
	The probability that we obtain $1$ on the last qubit in measuring $\ket{\Psi_1}$ is
	\begin{equation}
		P:=\frac{1}{2}+\frac{1}{2h^m(Ac^m (m!)^\sigma+2\epsilon)}\sum_{j=-n}^{n}d^{(m)}_{n,j}\sum_{s\in\mathcal{S}} p_s F_{\tilde{\epsilon}}(s,x+jh).
	\end{equation}
	Defining
	\begin{equation}
		Y:=(Ac^m (m!)^\sigma+2\epsilon)(2P-1), \label{eq:outAlg1til}
	\end{equation}
	we see that
	\begin{eqnarray}
		&&\left|\mathcal{D}_{n,m,h}[V](x)-Y\right| \nonumber \\
		&\le& \sum_{j=-n}^{n}\sum_{s\in\mathcal{S}} \frac{1}{h^m}\left|d^{(m)}_{n,j}\right| p_s \left|F(s,x+jh)-F_{\tilde{\epsilon}}(s,x+jh)\right| \nonumber \\
		&\le& \frac{1}{h^m}D^{(m)}_n\tilde{\epsilon}  \nonumber \\
		&=& \epsilon. \label{eq:DVY2}
	\end{eqnarray}
	Therefore, we obtain an estimate of $\mathcal{D}_{n,m,h}[V](x)$ as follows: obtain an estimate $\tilde{P}$ of $P$ by QAE, in which $Q_1$ is iteratively called, and then output
	\begin{equation}
		\tilde{Y}:=(Ac^m (m!)^\sigma+2\epsilon)(2\tilde{P}-1). \label{eq:tilY2}
	\end{equation}
	\\
	
	\noindent \textbf{Accuracy and complexity}
	
	\begin{eqnarray}
		&&|V^{(m)}(x)-\mathcal{D}_{n,m,h}[V](x)| \nonumber \\
		&\le& |R_V(x,n,m,h)| h^{2n-m+1} \nonumber \\
		&\le& Ac^{2n+1}((2n+1)!)^\sigma m \left(\frac{em}{2}\right)^{2n} h^{2n-m+1} \nonumber \\
		&\le& \epsilon \label{eq:VmDV}
	\end{eqnarray}
	holds.
	Here, the first inequality holds because of Theorem \ref{th:numDiff}.
	At the second inequality, we use Lemma \ref{lem:R} with $M= Ac^{2n+1}((2n+1)!)^\sigma$, since $V\in\mathcal{G}_{A,c,\sigma}$ as easily seen from $F(s,\cdot)\in\mathcal{G}_{A,c,\sigma}$ for every $s\in\mathcal{S}$.
	The last inequality is (\ref{eq:hthSumInOra}).
	Using (\ref{eq:VmDV}) and (\ref{eq:DVY2}), we see that, if we have $\tilde{Y}$ such that
	\begin{equation}
		|\tilde{Y}-Y|\le \epsilon, \label{eq:YtilY}
	\end{equation}
	the following holds:
	\begin{eqnarray}
		&&|V^{(m)}(x)-\tilde{Y}| \nonumber \\
		&\le& |V^{(m)}(x)-\mathcal{D}_{n,m,h}[V](x)| + |\mathcal{D}_{n,m,h}[V](x)-Y| + |Y-\tilde{Y}| \nonumber \\
		&\le& 3\epsilon, \label{eq:VmY}
	\end{eqnarray}
	which means that $\tilde{Y}$ is an $3\epsilon$-approximation of $V^{(m)}(x)$.
	
	Then, let us estimate the query complexity to obtain $\tilde{P}$ such that (\ref{eq:YtilY}) holds by QAE.
	Because of the definitions (\ref{eq:outAlg1til}) and (\ref{eq:tilY2}), it is sufficient to obtain $\tilde{P}$ such that
	\begin{equation}
		|\tilde{P}-P|\le \frac{\epsilon}{2(Ac^m (m!)^\sigma+2\epsilon)}
	\end{equation}
	by QAE.
	For this, QAE with $N_{Q_1}$ calls to $Q_1$, where $N_{Q_1}$ is at most (\ref{eq:numOraSumInOracleS_SmF}), is sufficient.
	Since $Q_1$ uses $O_S$ once and $O_{F,\tilde{\epsilon}}$ at most $2n+1$ times, we evaluate the numbers of queries to them as (\ref{eq:numOraSumInOracleS_SmF}) and (\ref{eq:numOraSumInOracleF_SmF}).
	We also have (\ref{eq:qubitOraSumInOracle}), combining the assumption that $O_{F,\tilde{\epsilon}}$ uses $O\left(\log^a \left(\frac{1}{\tilde{\epsilon}}\right)\right)$ qubits with (\ref{eq:epstil}) and Lemma \ref{lem:cj}.\\
	
	For $n=\left\lceil\frac{m}{2}\right\rceil$ and $h=h_{\rm min}$, which we can check satisfy (\ref{eq:hthSumInOra}) by simple algebra, we just plug these values into (\ref{eq:numOraSumInOracleF_SmF}) and (\ref{eq:qubitOraSumInOracle}), and then obtain (\ref{eq:numOraSumInOracleF_SmF_case1}) and (\ref{eq:qubitOraSumInOracle_case1}), respectively (note that the number of queries to $O_S$ does not depend on $n$ and $h$).
	
	Also for $n=n_{\rm th}$ and $h=h_{\rm th}$, for which (\ref{eq:hthSumInOra}) holds as shown in the proof of Lemma \ref{lem:cenAppOrd} (see (\ref{eq:temp4})), just plugging these values into (\ref{eq:numOraSumInOracleF_SmF}) and (\ref{eq:qubitOraSumInOracle}) with some algebra leads to (\ref{eq:numOraSumInOracleF_SmF_case2}) and (\ref{eq:qubitOraSumInOracle_case2}), respectively.
	
\end{proof}

(\ref{eq:numOraSumInOracleS_SmF}) indicates that, asymptotically, the upper bound on the number of queries to $O_S$ scales $\epsilon$ as $O\left(\frac{1}{\epsilon}\right)$ and independent from $n$ and $h$.
Besides, from (\ref{eq:numOraSumInOracleF_SmF}), we see that the query number bound for $O_{F,\tilde{\epsilon}}$ depends on not $h$ but $n$ linearly, and scales as $O\left(\frac{1}{\epsilon}\right)$ if $n$ does not depend on $\epsilon$.
Hence, setting $n$ to the minimum value $\left\lceil\frac{m}{2}\right\rceil$ and $h$ to (\ref{eq:h1}) is best in the aspect of this bound.
However, the qubit number for $O_{F,\tilde{\epsilon}}$ becomes large in this setting, depending on $\epsilon$ as $O\left(\log^a\left(\frac{1}{\epsilon^{\frac{m}{2}+1}}\right)\right)$ or $O\left(\log^a\left(\frac{1}{\epsilon^{m+1}}\right)\right)$.
On the other hand, setting $n=n_{\rm th}$ and $h=h_{\rm th}$ leads to less qubit number scaling as $O\left(\log^a\left(\frac{1}{\epsilon}\right)\right)$, adding a $O\left(\log\left(\frac{1}{\epsilon}\right)\right)$ factor to the query number bound for $O_{F,\tilde{\epsilon}}$.

\subsubsection{The case of a nonsmooth integrand}

Next, we consider the nonsmooth integrand case.
Now, $\mathcal{D}_{n,m,h}[F(s,\cdot)](x)$ can be unbounded when $h\rightarrow 0$, and therefore, in the algorithm we propose, we estimate the expectation value of not $\mathcal{D}_{n,m,h}[F(s,\cdot)](x)$ but $\sum_{j=-n}^{n} d^{(m)}_{n,j} F(s,x+jh)$, omitting the factor $1/h^{m}$, and then divide the estimate by $h^{m}$ to obtain $\mathcal{D}_{n,m,h}[V](x)$.
We present the formal statement on this method as follows.

\begin{theorem}

Let $\mathcal{S},x,m$ and $\epsilon$ be as described in Theorem \ref{th:InOracleSmF}.
Let $F:\mathcal{S}\times\mathbb{R}\rightarrow\mathbb{R}$ be a function satisfying the condition (i) in Theorem \ref{th:InOracleSmF} and the following
\begin{enumerate}
	\renewcommand{\labelenumi}{(\roman{enumi})'}
	\setcounter{enumi}{1}
	\item there exist $A,c\in\mathbb{R}_+$ and $\sigma\in\mathbb{R}$ such that $V$ defined as (\ref{eq:V}) is in $\mathcal{G}_{A,c,\sigma}$.
\end{enumerate}
Suppose that we are given accesses to the oracles $O_S$ and $O_{F,\epsilon}$ for any $\epsilon\in\mathbb{R}_+$, which are described in Theorem \ref{th:InOracleSmF}.
Then, for any $n\in\mathbb{N}_{\ge \frac{m}{2}}$ and $h\in\mathbb{R}_+$ satisfying (\ref{eq:hthSumInOra}), there is a quantum algorithm $\mathcal{A}_2(m,\epsilon;n,h)$, which outputs $3\epsilon$-approximation of $V^{(m)}(x)$ with probability at least 0.99, making
\begin{equation}
	O\left(\frac{m\left[2\left(1+\log n\right)\right]^m B}{h^m\epsilon}\right) \label{eq:numOraSumInOracleS}
\end{equation}
calls to $O_S$ and
\begin{equation}
	O\left(\frac{mn\left[2\left(1+\log n\right)\right]^m B}{h^m\epsilon}\right)\label{eq:numOraSumInOracleF}
\end{equation}
calls to $O_{F,\tilde{\epsilon}}$ and using qubits at most (\ref{eq:qubitOraSumInOracle}) for $O_{F,\tilde{\epsilon}}$.
Here, $\tilde{\epsilon}$ is given as (\ref{eq:epstil}).
In particular, $\mathcal{A}_2\left(m,\epsilon;\left\lceil \frac{m}{2} \right\rceil,h_{\rm min}\right)$, where $h_{\rm min}$ is given as (\ref{eq:h1}), makes
\begin{fleqn}[-35pt]
\begin{equation}
	\begin{dcases}
	O\left(\frac{e^{\frac{1}{2}m(m+1)}m^{\frac{1}{2}m^2+m+1}A^{\frac{m}{2}}c^{\frac{1}{2}m(m+2)}((m+2)!)^{m\sigma/2}\left[2\left(1+\log \left(\frac{m+1}{2}\right)\right)\right]^m B}{2^{\frac{1}{2}m(m+1)}\epsilon^{\frac{m}{2}+1}}\right) \\
	\qquad\qquad\qquad\qquad\qquad\qquad\qquad\qquad\qquad\qquad\qquad;  \ {\rm if} \ m \ {\rm is \ odd} \\
	O\left(\frac{e^{m^2}m^{m^2+m+1}A^mc^{m^2+m}((m+1)!)^{m\sigma}\left[2\left(1+\log \left(\frac{m}{2}\right)\right)\right]^m B}{2^{m^2}\epsilon^{m+1}}\right) \\
	\qquad\qquad\qquad\qquad\qquad\qquad\qquad\qquad\qquad\qquad\qquad ;  \ {\rm if} \ m \ {\rm is \ even}
	\end{dcases} \label{eq:numOraSumInOracleS_case1}
\end{equation}
\end{fleqn}
calls to $O_S$ and
\begin{fleqn}[-35pt]
\begin{equation}
	\begin{dcases}
		O\left(\frac{e^{\frac{1}{2}m(m+1)}m^{\frac{1}{2}m^2+m+2}A^{\frac{m}{2}}c^{\frac{1}{2}m(m+2)}((m+2)!)^{m\sigma/2}\left[2\left(1+\log \left(\frac{m+1}{2}\right)\right)\right]^m B}{2^{\frac{1}{2}m(m+1)}\epsilon^{\frac{m}{2}+1}}\right) \\
		\qquad\qquad\qquad\qquad\qquad\qquad\qquad\qquad\qquad\qquad\qquad;  \ {\rm if} \ m \ {\rm is \ odd} \\
		O\left(\frac{e^{m^2}m^{m^2+m+2}A^mc^{m^2+m}((m+1)!)^{m\sigma}\left[2\left(1+\log \left(\frac{m}{2}\right)\right)\right]^m B}{2^{m^2}\epsilon^{m+1}}\right) \\
		\qquad\qquad\qquad\qquad\qquad\qquad\qquad\qquad\qquad\qquad\qquad ;  \ {\rm if} \ m \ {\rm is \ even}
	\end{dcases}
	\label{eq:numOraSumInOracleF_case1}
\end{equation}
\end{fleqn}
calls to $O_{F,\tilde{\epsilon}}$ and uses qubits at most (\ref{eq:qubitOraSumInOracle_case1}) for $O_{F,\tilde{\epsilon}}$, and $\mathcal{A}_2(m,\epsilon;n_{\rm th},h_{\rm th})$, where $n_{\rm th}$ and $h_{\rm th}$ are given as (\ref{eq:nth}) and (\ref{eq:hcond}) respectively, makes
\begin{equation}
	O\left(\frac{m(2ecm)^mB}{\epsilon}\log_2^{m\sigma^+}\left(\frac{2^{m\sigma^+}}{\epsilon^\prime}\right)\log^m\left(\log_2\left(\frac{2^{m\sigma^+}}{\epsilon^\prime}\right)\right)\right) \label{eq:numOraSumInOracleS_case2}
\end{equation}
calls to $O_S$ and
\begin{equation}
	O\left(\frac{m(2ecm)^mB}{\epsilon}\log_2^{m\sigma^+ + 1}\left(\frac{2^{m\sigma^+}}{\epsilon^\prime}\right)\log^m\left(\log_2\left(\frac{2^{m\sigma^+}}{\epsilon^\prime}\right)\right)\right) \label{eq:numOraSumInOracleF_case2}
\end{equation}
calls to $O_{F,\tilde{\epsilon}}$ and uses qubits at most (\ref{eq:qubitOraSumInOracle_case2}) for $O_{F,\tilde{\epsilon}}$, where $\epsilon^\prime$ is given as (\ref{eq:epsprime}).
\label{th:InOracle}
\end{theorem}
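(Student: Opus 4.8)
The plan is to adapt the proof of Theorem \ref{th:InOracleSmF} almost verbatim, changing only the quantity that is loaded into the amplitude of the ancilla qubit $R_5$. Since condition (ii)$'$ guarantees smoothness of $V$ but not of the individual $F(s,\cdot)$, the per-sample quotient $\mathcal{D}_{n,m,h}[F(s,\cdot)](x)$ can blow up like $1/h^m$ as $h\to 0$ and is therefore no longer a legitimate integrand for QMCI. First I would run Procedure \ref{proc1} unchanged through the end of its loop, so that $R_4$ holds the \emph{unscaled} difference sum $\sum_{j=-n}^{n} d^{(m)}_{n,j} F_{\tilde{\epsilon}}(s,x+jh)$, with the factor $1/h^m$ stripped off. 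In the controlled-rotation step (step 8) I would replace the smooth-case normalization $h^m(Ac^m(m!)^\sigma+2\epsilon)$ by the $h$-independent quantity $D^{(m)}_n(B+\tilde{\epsilon})$, deferring the division by $h^m$ to classical post-processing.

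The validity of the amplitude encoding is then immediate from the two hypotheses available here: condition (i) gives $|F_{\tilde{\epsilon}}(s,x+jh)|\le B+\tilde{\epsilon}$ through (\ref{eq:FFepsdiff}), and the definition of $D^{(m)}_n$ in Lemma \ref{lem:cj} gives $\sum_j|d^{(m)}_{n,j}|=D^{(m)}_n$, so that $\bigl|\sum_j d^{(m)}_{n,j}F_{\tilde{\epsilon}}(s,x+jh)\bigr|\le D^{(m)}_n(B+\tilde{\epsilon})$ and the square-root arguments lie in $[0,1]$. The probability of reading $1$ on $R_5$ is then $P=\tfrac12+\tfrac{1}{2D^{(m)}_n(B+\tilde{\epsilon})}\sum_{s}p_s\sum_j d^{(m)}_{n,j}F_{\tilde{\epsilon}}(s,x+jh)$, from which I recover an estimate of $h^m\mathcal{D}_{n,m,h}[V](x)$ and, dividing the QAE output by $h^m$, the sought estimate $\tilde{Y}$ of $\mathcal{D}_{n,m,h}[V](x)$.

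For accuracy I would again split the total error into three pieces that each contribute at most $\epsilon$. The finite-difference term $|V^{(m)}(x)-\mathcal{D}_{n,m,h}[V](x)|$ is controlled exactly as in (\ref{eq:VmDV}): because (ii)$'$ gives $V\in\mathcal{G}_{A,c,\sigma}$, Lemma \ref{lem:R} with $M=Ac^{2n+1}((2n+1)!)^\sigma$ together with (\ref{eq:hthSumInOra}) bounds it by $\epsilon$. The replacement error coming from $F_{\tilde{\epsilon}}$ is at most $h^{-m}D^{(m)}_n\tilde{\epsilon}=\epsilon$ by the choice (\ref{eq:epstil}) of $\tilde{\epsilon}$. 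Finally, since $\tilde{Y}=D^{(m)}_n(B+\tilde{\epsilon})(2\tilde{P}-1)/h^m$, making the QAE contribution at most $\epsilon$ requires estimating $P$ within $O\bigl(\epsilon h^m/(D^{(m)}_n B)\bigr)$, which QAE achieves with $O\bigl(D^{(m)}_n B/(\epsilon h^m)\bigr)$ calls to the oracle $Q_1$ of Procedure \ref{proc1} (here I use $\tilde{\epsilon}\le B$). Summing the three pieces gives the $3\epsilon$ bound.

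The complexity statements then follow by routine substitution. Since $Q_1$ invokes $O_S$ once and $O_{F,\tilde{\epsilon}}$ at most $2n+1$ times, inserting the Lemma \ref{lem:cj} bound $D^{(m)}_n\le 2m[2(1+\log n)]^m$ into the QAE cost yields (\ref{eq:numOraSumInOracleS}) and (\ref{eq:numOraSumInOracleF}); the qubit count is identical to the smooth case, (\ref{eq:qubitOraSumInOracle}), because $\tilde{\epsilon}$ is unchanged. The two specializations are obtained by plugging $(n,h)=(\lceil m/2\rceil,h_{\rm min})$ from (\ref{eq:h1}) and $(n,h)=(n_{\rm th},h_{\rm th})$ from (\ref{eq:nth}) and (\ref{eq:hcond}) into these general bounds, using Lemma \ref{lem:cenAppOrd} (and in particular (\ref{eq:temp4})) to confirm that both choices satisfy (\ref{eq:hthSumInOra}). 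I expect the only genuinely delicate point to be the normalization: recognizing that the correct finite scale for the amplitude is the $h$-independent $D^{(m)}_n B$ rather than any bound built from the Gevrey data of $V$, so that the entire $1/h^m$ growth is transferred into the query complexity — precisely the origin of the $1/h^m$ factors distinguishing (\ref{eq:numOraSumInOracleS})--(\ref{eq:numOraSumInOracleF}) from their smooth-case analogues. The remaining even/odd case split and the double-logarithmic factors in (\ref{eq:numOraSumInOracleS_case1})--(\ref{eq:numOraSumInOracleF_case2}) are then merely careful algebra.
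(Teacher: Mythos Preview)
Your proposal is correct and follows essentially the same approach as the paper: modify Procedure \ref{proc1} only at step 8 by replacing the normalization $h^m(Ac^m(m!)^\sigma+2\epsilon)$ with the $h$-independent $D^{(m)}_n(B+\tilde{\epsilon})$, then divide by $h^m$ in post-processing; the accuracy split, the use of Lemma \ref{lem:R} via condition (ii)$'$, the choice of $\tilde{\epsilon}$, and the complexity analysis through Lemma \ref{lem:cj} and the QAE cost $O\bigl(D^{(m)}_n B/(h^m\epsilon)\bigr)$ are all exactly as in the paper. The only cosmetic difference is that the paper names the modified oracle $Q_2$ and appeals to ``(\ref{eq:hthSumInOra}) with simple algebra'' to absorb $\tilde{\epsilon}$ into $O(B)$, whereas you invoke $\tilde{\epsilon}\le B$; neither affects the argument.
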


\begin{proof}
	We first present the algorithm, and then consider the accuracy and the complexity.\\
	
	\noindent \textbf{Algorithm}
	
	Consider a system consisting of same quantum registers $R_1,...,R_5$ as Theorem \ref{th:InOracleSmF} and some ancillary registers as necessary.
	We can perform the operation same as Procedure \ref{proc1} except the replacement of (\ref{eq:ancRot}) with
	\begin{equation}
		\sqrt{\frac{1}{2}+\frac{X}{2(D^{(m)}_nB+\tilde{\epsilon})}}\Ket{1}+\sqrt{\frac{1}{2}-\frac{X}{2(D^{(m)}_nB+\tilde{\epsilon})}}\Ket{0}, \label{eq:ancRot}
	\end{equation}
	We denote the oracle that corresponds to this operation as $Q_2$.
	By this operation, the quantum state is transformed as follows: 
	\begin{widetext}
		
		\begin{eqnarray}
			&&\ket{0}\ket{0}\ket{0}\ket{0}\ket{0} \nonumber\\
			&\rightarrow& \sum_{s\in\mathcal{S}} \sqrt{p_s} \ket{s}\ket{x+nh}\ket{0}\Ket{\sum_{j=-n}^{n}d^{(m)}_{n,j}F_{\tilde{\epsilon}}(s,x+jh)}\nonumber \\
			&&\qquad\qquad \otimes\left(\sqrt{\frac{1}{2}+\frac{1}{2D^{(m)}_n(B+\tilde{\epsilon})}\sum_{j=-n}^{n}d^{(m)}_{n,j}F_{\tilde{\epsilon}}(s,x+jh)}\Ket{1}+\sqrt{\frac{1}{2}-\frac{1}{2D^{(m)}_n(B+\tilde{\epsilon})}\sum_{j=-n}^{n}d^{(m)}_{n,j}F_{\tilde{\epsilon}}(s,x+jh)}\Ket{0}\right)=:\ket{\Psi_2}.\nonumber \\
			&& \label{eq:transfSumInOracle}
		\end{eqnarray}
	\end{widetext}
	Note that the insides of the square roots in the last line in (\ref{eq:transfSumInOracle}) are in $[0,1]$, since
	\begin{eqnarray}
		&&\left|\frac{1}{D^{(m)}_n(B+\tilde{\epsilon})}\sum_{j=-n}^{n}d^{(m)}_{n,j}F_{\tilde{\epsilon}}(s,x+jh)\right| \nonumber \\
		&\le& \left|\frac{1}{D^{(m)}_n(B+\tilde{\epsilon})}\sum_{j=-n}^{n}d^{(m)}_{n,j}F(s,x+jh)\right| \nonumber \\
		&&\qquad +\left|\frac{1}{D^{(m)}_n(B+\tilde{\epsilon})}\sum_{j=-n}^{n}d^{(m)}_{n,j}\left(F(s,x+jh)-F_{\tilde{\epsilon}}(s,x+jh)\right)\right|  \nonumber \\
		&\le& \frac{1}{D^{(m)}_n(B+\tilde{\epsilon})} \sum_{j=-n}^{n}\left|d^{(m)}_{n,j}\right|\cdot\left|F(s,x+jh)\right|  \nonumber \\
		&& \qquad + \frac{1}{D^{(m)}_n(B+\tilde{\epsilon})} \sum_{j=-n}^{n}\left|d^{(m)}_{n,j}\right|\cdot\left|F(s,x+jh)-F_{\tilde{\epsilon}}(s,x+jh)\right|   \nonumber \\
		&\le& \frac{1}{D^{(m)}_n(B+\tilde{\epsilon})} D^{(m)} B + \frac{1}{D^{(m)}_n(B+\tilde{\epsilon})} D^{(m)} \tilde{\epsilon} \nonumber \\
		&\le& 1.
	\end{eqnarray}
	The probability that we obtain $1$ on the last qubit in measuring $\ket{\Psi_2}$ is
	\begin{equation}
		P:=\frac{1}{2}+\frac{1}{2D^{(m)}_n(B+\tilde{\epsilon})}\sum_{j=-n}^{n}d^{(m)}_{n,j}\sum_{s\in\mathcal{S}} p_s F_{\tilde{\epsilon}}(s,x+jh). \label{eq:P}
	\end{equation}
	Defining
	\begin{equation}
		Y:=\frac{D^{(m)}_n (B+\tilde{\epsilon})}{h^m}(2P-1), \label{eq:outAlg1}
	\end{equation}
	we see that $\left|\mathcal{D}_{n,m,h}[V](x)-Y\right|\le\epsilon$ similarly to (\ref{eq:DVY2}).
	Therefore, we obtain an estimate of $\mathcal{D}_{n,m,h}[V](x)$ as follows: obtain an estimate $\tilde{P}$ of $P$ by QAE, in which $Q_2$ is iteratively called, and then output
	\begin{equation}
		\tilde{Y}:=\frac{D^{(m)}_n (B+\tilde{\epsilon})}{h^m}(2\tilde{P}-1). \label{eq:tilY}
	\end{equation}
	\\
	
	\noindent \textbf{Accuracy and complexity}
	
	As shown in the proof of Theorem \ref{th:InOracleSmF}, if we have $\tilde{Y}$ such that $|\tilde{Y}-Y|\le \epsilon$, $\tilde{Y}$ is an $3\epsilon$-approximation of $V^{(m)}(x)$.	
	Then, let us estimate the query complexity to obtain $\tilde{P}$ that makes this hold by QAE.
	Because of the definitions (\ref{eq:outAlg1}) and (\ref{eq:tilY}), it is sufficient to obtain $\tilde{P}$ such that
	\begin{equation}
	  	|\tilde{P}-P|\le \frac{h^m}{2D^{(m)}_n (B+\tilde{\epsilon})}\epsilon
	\end{equation}
	by QAE.
	For this, QAE with $N_{Q_2}$ calls to $Q_2$, where
	\begin{equation}
		N_{Q_2}=O\left(\frac{D^{(m)}_n (B+\tilde{\epsilon})}{h^m\epsilon}\right), \label{eq:compTemp}
	\end{equation}
	is sufficient.
	Using Lemma \ref{lem:cj} and (\ref{eq:hthSumInOra}) with simple algebra, we see that (\ref{eq:compTemp}) is evaluated as (\ref{eq:numOraSumInOracleS}).
	Since $Q_2$ uses $O_S$ once and $O_{F,\tilde{\epsilon}}$ at most $2n+1$ times, we have (\ref{eq:numOraSumInOracleS}) and (\ref{eq:numOraSumInOracleF}).
	We also prove the claim on the qubit number for $O_{F,\tilde{\epsilon}}$ similarly to Theorem \ref{th:InOracleSmF}.\\
	
	Remaining claims for the specific settings on $n$ and $h$ are proven by simply plugging their values into the expressions of the query numbers and the qubit number.
	
\end{proof}

Since we multiply the result of QAE by $\frac{D^{(m)}_n (B+\tilde{\epsilon})}{h^m}$, this factor is included also in the upper bounds (\ref{eq:numOraSumInOracleS}) and (\ref{eq:numOraSumInOracleF}) on the query numbers in this method.
In the case of $n=\left\lceil\frac{m}{2}\right\rceil$, the minimum value of $n$, this increases the exponent of $\frac{1}{\epsilon}$ in the query number bounds, since we set $h$ to a small value depending on $\epsilon$ as (\ref{eq:h1}).
Conversely, if we set $n$ efficiently large as $n=n_{\rm th}$, we can set $h$ to $h_{\rm th}$, which depends on $\epsilon$ only logarithmically through $n$, and therefore the query number bounds (\ref{eq:numOraSumInOracleS_case2}) and (\ref{eq:numOraSumInOracleF_case2}) scale with $\epsilon$ as $O\left(\frac{1}{\epsilon}\right)$, expect the logarithmic factor.
Besides, larger $n$ and $h$ lead to the smaller qubit number for $O_{F,\epsilon}$, similarly to the smooth integrand case.
Hence, in the nonsmooth integrand case, setting $n=n_{\rm th}$ and $h=h_{\rm th}$ is better than setting to smaller numbers, in terms of both query complexity and qubit number.

\subsection{\label{sec:algoDetail} The sum-in-QAE method}

We now consider another quantum method for numerical differentiation of $V$, which we name the {\it sum-in-QAE} method.
Note that (\ref{eq:Vm}) is a two-fold summation, which consists of the sum over the values $s$ of the stochastic variable and the sum over $j$ in the finite difference formula.
Then, we can take the latter sum at the same time as the former in one QAE, unlike the iterative calls to $O_{F,\epsilon}$ for $x+(-n)h,...,x+nh$ in the naive iteration method.
We present the detail of this method in the following theorem.

\begin{theorem}
	Let $\mathcal{S},x,m,\epsilon$ and $F$ be as described in Theorem \ref{th:InOracle}.
	Suppose that we are given accesses to the oracles $O_S$ and $O_{F,\epsilon}$ for any $\epsilon\in\mathbb{R}_+$, which are described in Theorem \ref{th:InOracleSmF} and \ref{th:InOracle}.
	Besides, suppose that, for every $n\in N_{\frac{m}{2}}$, we have accesses to the oracles $O_{\rm coef}^{m,n}$, which performs the operation
	\begin{equation}
		O_{\rm coef}^{m,n}\ket{0}=\ket{\Psi_{\rm coef}^{n,m}}:=\frac{1}{\sqrt{D^{(m)}_{n}}}\sum_{j\in[-n:n]} \sqrt{\left|d^{(m)}_{n,j}\right|} \ket{j}, \label{eq:CoefState}
	\end{equation}
	with $d^{(m)}_{n,j}$ defined as (\ref{eq:centGen}), and $O_{\rm sign}^{m,n}$, which performs the operation
	\begin{equation}
		O_{\rm sign}^{m,n}\ket{j}\ket{0}=\ket{j}\Ket{\theta^{m,n}_j}, \label{eq:signOra}
	\end{equation}
	for every $j\in[-n:n]$ with
	\begin{equation}
		\theta^{m,n}_j :=
		\begin{cases}
			1 & ; \ {\rm if} \ d^{(m)}_{n,j}\ge 0\\
			0 & ; \ {\rm otherwise}
		\end{cases}
	\end{equation}
	Then, for any $n\in\mathbb{N}_{\ge \frac{m}{2}}$ and $h\in\mathbb{R}_+$ satisfying (\ref{eq:hthSumInOra}), there is a quantum algorithm $\mathcal{A}_3(m,\epsilon;n,h)$, which outputs $3\epsilon$-approximation of $V^{(m)}(x)$ with probability at least 0.99, making	calls to $O_S$, $O_{F,\tilde{\epsilon}}$, $O_{\rm coef}^{m,n}$ and $O_{\rm sign}^{m,n}$ the number of times shown in (\ref{eq:numOraSumInOracleS}), and using qubits at most (\ref{eq:qubitOraSumInOracle}) for $O_{F,\tilde{\epsilon}}$, with $\tilde{\epsilon}$ given as (\ref{eq:epstil}).
	In particular, $\mathcal{A}_3(m,\epsilon;n_{\rm th},h_{\rm th})$, where $n_{\rm th}$ and $h_{\rm th}$ are given as (\ref{eq:nth}) and (\ref{eq:hcond}) respectively, calls $O_S$, $O_{F,\tilde{\epsilon}}$, $O_{\rm coef}^{m,n}$ and $O_{\rm sign}^{m,n}$ the number of times shown in (\ref{eq:numOraSumInOracleS_case2}), and uses qubits at most (\ref{eq:qubitOraSumInOracle_case2}) for $O_{F,\tilde{\epsilon}}$, where $\epsilon^\prime$ is given as (\ref{eq:epsprime}).
	\label{th:InQAE} 
\end{theorem}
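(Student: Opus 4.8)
The plan is to mirror the proof of Theorem \ref{th:InOracle} almost step for step, since the target quantity is again $\mathcal{D}_{n,m,h}[V](x)$ (see (\ref{eq:Vm})--(\ref{eq:DF})) and the accuracy budget is identical; the only genuinely new ingredient is the state-preparation circuit, which realizes the double sum over $s$ and $j$ inside a single amplitude encoding so that the costly oracle $O_{F,\tilde{\epsilon}}$ is invoked \emph{once} per call rather than $2n+1$ times. I would first build the oracle $Q_3$. Starting from the all-zero state, apply $O_S$ on $R_1$ to produce $\sum_{s}\sqrt{p_s}\ket{s}$ and apply $O_{\rm coef}^{m,n}$ on the register that holds $j$ to produce $\ket{\Psi_{\rm coef}^{n,m}}$ of (\ref{eq:CoefState}), so that the joint amplitude on each pair $(s,j)$ is $\sqrt{p_s\,|d^{(m)}_{n,j}|/D^{(m)}_{n}}$. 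Then, using an adder/multiplier, write $x+jh$ into $R_2$, call $O_{F,\tilde{\epsilon}}$ a single time to compute $F_{\tilde{\epsilon}}(s,x+jh)$ into $R_3$, call $O_{\rm sign}^{m,n}$ of (\ref{eq:signOra}) to place the sign bit $\theta^{m,n}_j$ into an ancilla, and finally perform a controlled rotation on $R_5$ steered by both $R_3$ and the sign bit, yielding the amplitude
\[
\sqrt{\tfrac{1}{2}+\tfrac{(-1)^{1-\theta^{m,n}_j}F_{\tilde{\epsilon}}(s,x+jh)}{2(B+\tilde{\epsilon})}}\,\Ket{1}+\sqrt{\tfrac{1}{2}-\tfrac{(-1)^{1-\theta^{m,n}_j}F_{\tilde{\epsilon}}(s,x+jh)}{2(B+\tilde{\epsilon})}}\,\Ket{0}.
\]

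Next I would compute the probability $P$ of measuring $1$ on $R_5$. The key algebraic identity is $(-1)^{1-\theta^{m,n}_j}|d^{(m)}_{n,j}| = d^{(m)}_{n,j}$, so that the sign bit exactly restores the signed coefficient; combined with $\sum_{s,j} p_s|d^{(m)}_{n,j}|/D^{(m)}_n = 1$ (using $\sum_s p_s=1$ and the definition of $D^{(m)}_n$ in Lemma \ref{lem:cj}), the measurement probability collapses to precisely the $P$ of (\ref{eq:P}). From this point everything in Theorem \ref{th:InOracle} applies verbatim: the definition of $Y$ in (\ref{eq:outAlg1}), the bound $|\mathcal{D}_{n,m,h}[V](x)-Y|\le\epsilon$, the resulting $3\epsilon$-approximation claim, and the precision required of $\tilde{P}$. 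I would also check, exactly as in (\ref{eq:transfSumInOracle}), that the rotation arguments stay in $[0,1]$, which follows from condition (i) and the definition of $\tilde{\epsilon}$ in (\ref{eq:epstil}).

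For the complexity, QAE requires $N_{Q_3}=O\!\left(D^{(m)}_n(B+\tilde{\epsilon})/(h^m\epsilon)\right)$ calls, identical to (\ref{eq:compTemp}) and hence bounded by (\ref{eq:numOraSumInOracleS}) via Lemma \ref{lem:cj}. The decisive point is that $Q_3$ now contains exactly one call each to $O_S$, $O_{\rm coef}^{m,n}$, $O_{\rm sign}^{m,n}$, and $O_{F,\tilde{\epsilon}}$, so all four query counts equal (\ref{eq:numOraSumInOracleS}); in particular the factor of $n$ that burdened the $O_{F,\tilde{\epsilon}}$ count (\ref{eq:numOraSumInOracleF}) in the naive method disappears. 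The qubit count for $O_{F,\tilde{\epsilon}}$ is unchanged at (\ref{eq:qubitOraSumInOracle}), since $\tilde{\epsilon}$ is the same and the additional $j$-register and coefficient state cost only $O(\log n)$ qubits, negligible under the assumption that $O_F$ is the most qubit-intensive component. The specific settings $n=n_{\rm th}$, $h=h_{\rm th}$ (with $n_{\rm th},h_{\rm th}$ from (\ref{eq:nth}),(\ref{eq:hcond})) then follow by direct substitution, as in Theorem \ref{th:InOracle}.

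I expect the main obstacle to be the faithful encoding of the coefficient signs. Because $O_{\rm coef}^{m,n}$ can only produce the non-negative amplitudes $\sqrt{|d^{(m)}_{n,j}|}$, the sign of each $d^{(m)}_{n,j}$ must be reintroduced through the rotation angle using $\theta^{m,n}_j$, and the careful bookkeeping that makes $P$ reduce \emph{exactly} to (\ref{eq:P})---so that the entire accuracy analysis of Theorem \ref{th:InOracle} can be inherited rather than redone---is where the genuine care lies. The remainder of the argument is routine reuse of the earlier proof.
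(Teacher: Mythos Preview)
Your proposal is correct and follows essentially the same route as the paper: build an oracle $Q_3$ containing one call each to $O_S$, $O_{\rm coef}^{m,n}$, $O_{\rm sign}^{m,n}$, and $O_{F,\tilde{\epsilon}}$, verify that the probability of measuring $1$ on the ancilla is exactly the $P$ of (\ref{eq:P}), and then inherit the accuracy and complexity analysis from Theorem~\ref{th:InOracle} wholesale. The only noteworthy difference is in the sign-handling mechanism you flagged as the main obstacle: rather than folding $(-1)^{1-\theta^{m,n}_j}$ into the rotation angle as you do, the paper performs the \emph{same} rotation $\sqrt{\tfrac12+\tfrac{X}{2(B+\tilde{\epsilon})}}\ket{1}+\sqrt{\tfrac12-\tfrac{X}{2(B+\tilde{\epsilon})}}\ket{0}$ regardless of sign and then applies a NOT on the ancilla conditioned on $\theta^{m,n}_j=0$, so that the ancilla ends up in $\sqrt{\tfrac12+\tfrac{X}{2(B+\tilde{\epsilon})}}\ket{\theta^{m,n}_j}+\sqrt{\tfrac12-\tfrac{X}{2(B+\tilde{\epsilon})}}\ket{1-\theta^{m,n}_j}$; the resulting probability of $1$ is identical to yours, so this is a cosmetic circuit-level variation with no mathematical consequence.
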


\begin{proof}
	
Consider a system consisting of six quantum registers $R_1,...,R_6$ and some ancillary registers as necessary.
$R_3$ and $R_6$ have a single qubit, and the others have sufficient numbers of qubits.
We can perform the following operation to the system initialized to $\ket{0}\ket{0}\ket{0}\ket{0}\ket{0}\ket{0}$:\\

\begin{algorithm}[H]
	\caption{}
	\label{proc2}
	\begin{algorithmic}[1]
		\STATE Using $O_{\rm coef}^{m,n}$, generate the state (\ref{eq:CoefState}) on $R_1$.
		\STATE Set $x+jh$ on $R_2$, using the value on $R_1$ as $j$.
		\STATE By $O_{\rm sign}^{m,n}$, set $\theta^{m,n}_j$ on $R_3$, using the value on $R_1$ as $j$.
		\STATE Using $O_S$, generate $\sum_{s\in\mathcal{S}} \sqrt{p_s} \ket{s}$ on $R_4$. 
		\STATE By $O_{F,\tilde{\epsilon}}$, compute $F_{\tilde{\epsilon}}(s,x+jh)$ onto $R_5$, using the values on $R_2$ and $R_4$ as inputs.
		\STATE By a circuit similar to that in step 8 in Procedure \ref{proc1} and a NOT gate on $R_6$ activated only if the value on $R_3$ is 0, transform the state on $R_6$ to 
		\begin{equation}
			\sqrt{\frac{1}{2}+\frac{X}{2(B+\tilde{\epsilon})}}\ket{\theta}+\sqrt{\frac{1}{2}-\frac{X}{2(B+\tilde{\epsilon})}}\ket{1-\theta}, \label{eq:ancRot2}
		\end{equation}
		where $\theta$ and $X$ are the values on $R_3$ and $R_5$, respectively. 
	\end{algorithmic}
\end{algorithm}
We denote the oracle that corresponds to this operation by $Q_3$.
In this operation, the quantum state is transformed as follows: 
	\begin{widetext}
		
		\begin{eqnarray}
			&&\ket{0}\ket{0}\ket{0}\ket{0}\ket{0}\ket{0} \nonumber\\
			&\xrightarrow{1}& \frac{1}{\sqrt{D^{(m)}_{n}}}\sum_{j\in[-n:n]} \sqrt{\left|d^{(m)}_{n,j}\right|} \ket{j}\ket{0}\ket{0}\ket{0}\ket{0}\ket{0} \nonumber \\
			&\xrightarrow{2 \ {\rm and} \ 3}& \frac{1}{\sqrt{D^{(m)}_{n}}}\sum_{j\in[-n:n]} \sqrt{\left|d^{(m)}_{n,j}\right|} \ket{j}\ket{x+jh}\ket{\theta^{m,n}_j}\ket{0}\ket{0}\ket{0} \nonumber \\
			&\xrightarrow{4}& \frac{1}{\sqrt{D^{(m)}_{n}}}\sum_{j\in[-n:n]}\sum_{s\in\mathcal{S}} \sqrt{\left|d^{(m)}_{n,j}\right|p_s} \ket{j}\ket{x+jh}\ket{\theta^{m,n}_j}\ket{s}\ket{0}\ket{0} \nonumber \\
			&\xrightarrow{5}& \frac{1}{\sqrt{D^{(m)}_{n}}}\sum_{j\in[-n:n]}\sum_{s\in\mathcal{S}} \sqrt{\left|d^{(m)}_{n,j}\right|p_s} \ket{j}\ket{x+jh}\ket{\theta^{m,n}_j}\ket{s}\ket{F_{\tilde{\epsilon}}(s,x+jh)}\ket{0} \nonumber \\
			&\xrightarrow{6}& \frac{1}{\sqrt{D^{(m)}_{n}}}\sum_{j\in[-n:n]}\sum_{s\in\mathcal{S}} \sqrt{\left|d^{(m)}_{n,j}\right|p_s} \ket{j}\ket{x+jh}\ket{\theta^{m,n}_j}\ket{s}\ket{F_{\tilde{\epsilon}}(s,x+jh)}\nonumber \\
			&&\qquad\qquad\qquad\qquad \otimes\left(\sqrt{\frac{1}{2}+\frac{F_{\tilde{\epsilon}}(s,x+jh)}{2(B+\tilde{\epsilon})}}\Ket{\theta^{m,n}_j}+\sqrt{\frac{1}{2}-\frac{F_{\tilde{\epsilon}}(s,x+jh)}{2(B+\tilde{\epsilon})}}\Ket{1-\theta^{m,n}_j}\right)=:\ket{\Phi}.\nonumber \\
			&& \label{eq:mainTransf}
		\end{eqnarray}
	\end{widetext}
	Note that the insides of the square roots in the last line in (\ref{eq:mainTransf}) are in $[0,1]$, since
	\begin{equation}
		|F_{\tilde{\epsilon}}(s,x+jh)|\le|F(s,x+jh)|+|F_{\tilde{\epsilon}}(s,x+jh)-F(s,x+jh)|\le B+\tilde{\epsilon}. \nonumber
	\end{equation}
	The probability that we obtain $1$ on $R_6$ in measuring $\ket{\Phi}$ is equal to $P$ in (\ref{eq:P}), and $Y$ defined as (\ref{eq:outAlg1}) satisfies $\left|\mathcal{D}_{n,m,h}[V](x)-Y\right|\le\epsilon$ as seen in the proof of Theorem \ref{th:InOracle}.
	Therefore, we get an estimate of $\mathcal{D}_{n,m,h}[V](x)$ as follows: obtain an estimate $\tilde{P}$ of $P$ by QAE, in which $Q_3$ is iteratively called, and then output $\tilde{Y}$ in (\ref{eq:tilY}).

	\quad \\
	
	\noindent \textbf{Accuracy and complexity}
	
	The sum-in-QAE method is same as the naive iteration method for nonsmooth $F$ in that the quantity estimated by QAE is $P$ in (\ref{eq:P}) and that the final output is $\tilde{Y}$ in (\ref{eq:tilY}) calculated with the estimate $\tilde{P}$ for $P$, although the iteratively called oracles $Q_2$ and $Q_3$ are different.
	Therefore, the number of calls to $Q_3$ in the sum-in-QAE method is equal to the number of calls to $Q_2$ in the naive iteration method for nonsmooth $F$, and is evaluated as (\ref{eq:compTemp}), or (\ref{eq:numOraSumInOracleS}).
	Since $O_S$, $O_{F,\tilde{\epsilon}}$, $O_{\rm coef}^{m,n}$ and $O_{\rm sign}^{m,n}$ are called once in $Q_3$, the numbers of calls to these oracles are also evaluated as (\ref{eq:numOraSumInOracleS}).
	Since the same $\tilde{\epsilon}$ is used in the two methods, the qubit number for $O_{F,\tilde{\epsilon}}$ is also same, and given as (\ref{eq:qubitOraSumInOracle}).
	
	The remaining claim on the specific case\footnote{Although we considered the case that $n$ is set to the minimum value $\left\lceil\frac{m}{2}\right\rceil$ in Theorem \ref{th:InOracle}, we omit this case here, since this is less efficient than larger $n$ in terms of both query complexity and qubit number, as we saw in the proof of Theorem \ref{th:InOracle}.} that $n=n_{\rm th}$ and $h=h_{\rm th}$ is also proven similarly to the discussion in the proof of Theorem \ref{th:InOracle}.

\end{proof}

Compared with the number of calls to $O_{F,\tilde{\epsilon}}$ in the naive iteration method, which scales with $n$ as $O(n)$ and $O\left(n\times{\rm polylog}(n)\right)$ in the cases of smooth and nonsmooth $F$, respectively, that in the sum-in-QAE method more mildly scales as $O\left({\rm polylog}(n)\right)$.
This is because the sum-in-QAE method takes the sum in central difference formula by QAE and the normalization factor $D^{(m)}_n$ in the QAE target state $\ket{\Phi}$ is $O\left({\rm polylog}(n)\right)$.

Note that the sum-in-QAE method works for differentiation of expected values like $V$, or, more broadly, functions calculated by QAE, but not for general functions.
That is, if a function $f$ is defined as a summation like $V$ and QAE is used for the sum, we can `mix' the sum in the central difference formula into the sum in $f$, and simultaneously perform the two sums in one QAE.
On the other hand, if $f$ is calculated without the aid of QAE, we can calculate $\mathcal{D}_{n,m,h}[f](x)$ by naively iterating calculation and summation of $f(x+(-n)h),...,f(x+nh)$ faster than the way that they are computed in quantum parallel and summed up by QAE.
This is because the number of calls to $f$ in the naive iteration is at most $2n+1$, which is $O\left({\rm polylog}\left(\frac{1}{\epsilon}\right)\right)$ even in the setting $n=n_{\rm th}$ and $h=h_{\rm th}$, but that in the QAE-based calculation is $O\left(\frac{1}{\epsilon}\right)$.

Let us comment also on implementation of $O_{\rm coef}^{m,n}$ and $O_{\rm sign}^{m,n}$.
$O_{\rm coef}^{m,n}$ is the circuit to load the $2n+1$ precalculated numbers $\left|d^{(m)}_{n,-n}\right|,...,\left|d^{(m)}_{n,n}\right|$ as the quantum state $\ket{\Psi_{\rm coef}^{m,n}}$, and implemented using $\Theta(n)$ elementary gates \cite{Mottonen,Bergholm,Shende,Plesch,Iten,Park,Araujo}.
$O_{\rm sign}^{m,n}$ is the circuit to return 1 or 0 for each $j\in[-n:n]$ in the predetermined way, and implemented by $n$ multi-controlled NOT gates.
Thus, for $n$ which is at most logarithmically large as $n_{\rm th}$, we regard these oracles as less time-consuming than $O_{F,\tilde{\epsilon}}$.

\subsection{\label{sec:compare} Comparison of the methods}



\begin{table*}[t]
	\caption{The dependencies of the number of queries to $O_{F,\tilde{\epsilon}}$ on error tolerance $\epsilon$ in the proposed methods in different cases. The ones which can be better in each case are underlined.}
	\label{tbl:summary}
	\centering
	\begin{tabular}{wc{25mm}wc{30mm}||wc{15mm}wc{40mm}|wc{15mm}wc{40mm}}
		\hline
		\multirow{2}{*}{smoothness of $F$} & \multirow{2}{*}{\# of qubits available} & \multicolumn{2}{c|}{the naive iteration method} & \multicolumn{2}{c}{the sum-in-QAE method} \\
		& & $(n,h)$ & query number & $(n,h)$ & query number \\
		\hline \hline
		\multirow{2}{*}[3ex]{nonsmooth} \rule[0mm]{0mm}{8mm}& \multirow{2}{*}[3ex]{---} & \multirow{2}{*}[3ex]{$\left(n_{\rm th},h_{\rm th}\right)$} & \multirow{2}{*}[3ex]{$\frac{1}{\epsilon}\log^{m\sigma^+ + 1}\left(\frac{1}{\epsilon}\right)\log^m\left(\log\left(\frac{1}{\epsilon}\right)\right)$} & \multirow{2}{*}[3ex]{$\left(n_{\rm th},h_{\rm th}\right)$} & \multirow{2}{*}[3ex]{\underline{$\frac{1}{\epsilon}\log^{m\sigma^+}\left(\frac{1}{\epsilon}\right)\log^m\left(\log\left(\frac{1}{\epsilon}\right)\right)$}} \\
		\hline
		\multirow{2}{*}{smooth} & \multirow{2}{*}[3ex]{large} \rule[0mm]{0mm}{8mm} & \multirow{2}{*}[3ex]{$\left(\left\lceil\frac{m}{2}\right\rceil,h_{\rm min}\right)$} &  \multirow{2}{*}[3ex]{\underline{$\frac{1}{\epsilon}$}} & \multirow{2}{*}[3ex]{$\left(n_{\rm th},h_{\rm th}\right)$} & \multirow{2}{*}[3ex]{$\frac{1}{\epsilon}\log^{m\sigma^+}\left(\frac{1}{\epsilon}\right)\log^m\left(\log\left(\frac{1}{\epsilon}\right)\right)$} \\
		 & \multirow{2}{*}[3ex]{small} \rule[0mm]{0mm}{8mm}& \multirow{2}{*}[3ex]{$\left(n_{\rm th},h_{\rm th}\right)$} & \multirow{2}{*}[3ex]{\underline{$\frac{1}{\epsilon}\log\left(\frac{1}{\epsilon}\right)$}} & \multirow{2}{*}[3ex]{$\left(n_{\rm th},h_{\rm th}\right)$} & \multirow{2}{*}[3ex]{\underline{$\frac{1}{\epsilon}\log^{m\sigma^+}\left(\frac{1}{\epsilon}\right)\log^m\left(\log\left(\frac{1}{\epsilon}\right)\right)$}} \\
		\hline
	\end{tabular}
\end{table*}

Now, let us compare the presented methods in terms of the dependency of the number of queries to $O_{F,\tilde{\epsilon}}$ on error tolerance $\epsilon$, and discuss which one is better in each case.
Table \ref{tbl:summary} summarizes this, displaying the query numbers in various situations.
We consider the cases of smooth and nonsmooth $F$.
With respect to qubit capacity, we consider the situation where we can use as many qubits as we want and the opposite situation where the qubits available are limited and we want to save the qubit number.
Then, we consider the two settings, $n=\left\lceil\frac{m}{2}\right\rceil,h=h_{\rm min}$ and $n=n_{\rm th},h=h_{\rm th}$, and assume that the former is possible only in the large qubit capacity case.
When $F$ is nonsmooth, regardless of qubit capacity, the sum-in-QAE method is better than the naive iteration method by a factor $\log\left(\frac{1}{\epsilon}\right)$.
On the other hand, when $F$ is smooth and many qubits are available, the naive iteration method is better.
The discussion in the case of smooth $F$ and the small qubit capacity depends on comparison between factors $\log\left(\frac{1}{\epsilon}\right)$ and $\log^{m\sigma^+}\left(\frac{1}{\epsilon}\right)\log^m\left(\log\left(\frac{1}{\epsilon}\right)\right)$.
If $m\sigma^+<1$, the sum-in-QAE method can be better than the naive iteration method.
In particular, when $\sigma\le 0$, the logarithmic factor for the naive iteration method is replaced with the doubly logarithmic factor in the sum-in-QAE method, and therefore the latter method is promising.
Note that, of course, the above discussion is comparison of the asymptotic upper bounds of query numbers with constant factors omitted, and the actual best method can vary depending on the problem.



\section{Summary \label{sec:sum}}

In this paper, we considered the quantum methods to calculate derivatives of an expected value with a parameter, that is, $V(x)=E[F(S,x)]$, where $F$ is a function of a stochastic variable $S$ and a real parameter $x$ and $E[\cdot]$ denotes the expectation with respect to the randomness of $S$.
This is related to financial derivative pricing, an important industrial application of QMCI, since calculation of sensitivities of financial derivatives falls into this problem.
Since naively applying some finite difference formula to $V(x)$ leads to a poor accuracy due to the error in calculating $V$ divided by a small difference width $h$, we adopted the direction that we apply the central difference formula to $F$ and estimate its expected value.
Then, given some oracles such as $O_{F,\epsilon}$, which calculates $F$ with finite precision, and $O_S$, which generates a quantum state corresponding to a probability distribution $S$, we concretely presented two quantum methods, and evaluated their query complexities, focusing on the dependency on the error tolerance $\epsilon$.
The first method is the naive iteration method, in which we calculate the difference formula by simply iterating calls to $O_{F,\epsilon}$ for the terms in the formula, and then estimate the expected value by QAE.
The second one is the sum-in-QAE, in which we `mix' the summation of the terms into the sum over the possible values of $S$, and perform these sums in one QAE at one time.
We saw that there are some issues on the smoothness of $F$ and the number of qubits available.
First, if $F$ is nonsmooth with respect to $x$ and we yet take small $h$, the value of the difference formula on $F$ can be large for some $(S,x)$, and this leads to the large complexity in QAE.
Second, even if $F$ is smooth, in order to take small $h$, we need calculate $V$ with high precision, which means that we have to use many qubits for the calculation.
Considering these points, we saw that either of two methods can be advantageous against the other depending on the situation.
When $F$ is smooth and we can use many qubits, the naive iteration method with the lowest order difference formula and small $h$ is better.
Conversely, if $F$ is nonsmooth, we can use the higher order formula with $h$ that is $\widetilde{O}(1)$ with respect to $\epsilon$, and the sum-in-QAE method is better.
Even when $F$ is smooth, if we want to save qubits by using the higher order formula with $h=\widetilde{O}(1)$, the sum-in-QAE method is better depending on the parameter $\sigma$, which measures the smoothness of $F$.
In any case, we can calculate the derivative of $V$ with $\widetilde{O}\left(\frac{1}{\epsilon}\right)$ complexity, which is same as that for calculating $V$ itself, except for logarithmic factors.

We believe that the discussion in this paper provides us with insights on the plausible situation in the future, where we want to apply quantum algorithms to complicated industrial problems consuming many qubits, but the number of qubits available is limited.  
For future works, we will further aim to search industrial applications of quantum algorithms to concrete problems in a practical setting.

\section*{Acknowledgment}

This work was supported by MEXT Quantum Leap Flagship Program (MEXT Q-LEAP) Grant Number JPMXS0120319794.

\appendix

\section{Proof of Lemma \ref{lem:R} \label{sec:PrLemR}}

\begin{proof}
	First, we consider the case where $m\ge 2$.
	From the definition of $a^{(m)}_{n,j}$ in (\ref{eq:centGen}), 
	\begin{equation}
		|a^{(m)}_{n,j}| \le \sum_{\substack{\{l_1,...,l_{m-1}\}\in \qquad\quad \\ \mathcal{P}_{m-1}([-n:n]\setminus\{0,j\})}} \frac{(n!)^2}{|jl_1\cdots l_{m-1}|} \label{eq:a}
	\end{equation}
	holds for any $j\in[-n:n]\setminus\{0\}$.
	Note that
	\begin{eqnarray}
		&& \sum_{\substack{\{l_1,...,l_{m-1}\}\in \qquad\quad \\ \mathcal{P}_{m-1}([-n:n]\setminus\{0,j\})}} \frac{1}{|l_1\cdots l_{m-1}|} \nonumber \\
		& \le & \frac{1}{(m-1)!} \left(\frac{1}{|-n|}+\cdots+\frac{1}{|-1|}+\frac{1}{1}+\cdots+\frac{1}{n}\right)^{m-1} \nonumber \\
		&\le& \frac{\left(2n\right)^{m-1}}{(m-1)!} \nonumber\\
		&\le& \frac{m^{2n}}{(m-1)!}.
	\end{eqnarray}
	Here, the third inequality follows from $\left(2n\right)^{m-1}\le m^{2n}$, which holds for any positive integers $m$ and $n$ such that $m\le 2n$.
	Then, we see that
	\begin{equation}
		|a^{(m)}_{n,j}| \le \frac{(n!)^2m^{2n}}{|j|(m-1)!}. 
	\end{equation}
	This holds also when $m=1$, since
	\begin{equation}
	\left|a^{(1)}_{n,j}\right|=\left|\frac{\prod_{i\in[-n:n]\setminus\{0\}}i}{j}\right|=\frac{(n!)^2}{|j|}.
	\end{equation}
	Hence, combining this and (\ref{eq:f2n1Bound}) with the definition of $R_f(x,n,m,h)$ in (\ref{eq:centGen}), we have
	\begin{eqnarray}
		|R_f(x,n,m,h)| &\le& \frac{Mm!\left(n!\right)^2m^{2n}}{(2n+1)!(m-1)!}\sum_{j=-n}^n\frac{|j|^{2n}}{(n+j)!(n-j)!} \nonumber \\
		&\le& \frac{Mm\left(n!\right)^2n^{2n}m^{2n}}{(2n+1)!(2n)!}\sum_{j=-n}^n\binom{2n}{n+j} \nonumber \\
		&\le& \frac{Mm\left(n!\right)^2(2n)^{2n}m^{2n}}{(2n+1)!(2n)!} \nonumber \\
		&\le& \frac{Mm(2n)^{2n}m^{2n}}{(2n+1)}\left(\frac{n^{n+\frac{1}{2}}e^{-n}e^{\frac{1}{12n}}}{(2n)^{2n+\frac{1}{2}}e^{-2n}e^{\frac{1}{24n+1}}}\right)^2\nonumber \\
		&=& \frac{Mm}{2(2n+1)}\left(\frac{em}{2}\right)^{2n}e^{\frac{1}{6n}-\frac{2}{24n+1}}\nonumber \\
		&\le& Mm \left(\frac{em}{2}\right)^{2n}.
	\end{eqnarray}
	Here, we used $\sum_{j=0}^k\binom{k}{j}=2^k$, which holds for any $k\in\mathbb{N}$, at the third inequality, and
	\begin{equation}
		\sqrt{2\pi}n^{n+\frac{1}{2}}e^{-n}e^{\frac{1}{12n+1}}<n!<\sqrt{2\pi}n^{n+\frac{1}{2}}e^{-n}e^{\frac{1}{12n}}, \label{eq:Stirling}
	\end{equation}
	which is given in \cite{Robbins}, at the fourth inequality.
\end{proof}

\section{Proof of Lemma \ref{lem:cj} \label{sec:PrLemcj}}

\begin{proof}
	When $m\ge 2$, because of (\ref{eq:a}), we have
	\begin{eqnarray}
		\left|d^{(m)}_{n,j}\right|&\le&\frac{m!(n!)^2}{(n+j)!(n-j)!|j|}\sum_{\substack{\{l_1,...,l_{m-1}\}\in \qquad\quad \\ \mathcal{P}_{m-1}([-n:n]\setminus\{0,j\})}} \frac{1}{|l_1\cdots l_{m-1}|} \nonumber \\
		&\le& \frac{m!}{|j|}\sum_{\substack{\{l_1,...,l_{m-1}\}\in \qquad\quad \\ \mathcal{P}_{m-1}([-n:n]\setminus\{0,j\})}} \frac{1}{|l_1\cdots l_{m-1}|} \label{eq:temp}
	\end{eqnarray}
	for any $j\in[-n:n]\setminus\{0\}$, where we used
	\begin{equation} \frac{(n!)^2}{(n+j)!(n-j)!}=\frac{n(n-1)\cdots(n-j+1)}{(n+j)(n+j-1)\cdots(n+1)}\le1. \nonumber
	\end{equation}
	Therefore, we obtain
	\begin{eqnarray}
		D^{(m)}_n&=&\sum_{j=-n}^n\left|d^{(m)}_{n,j}\right| \nonumber \\
		&\le& 2\sum_{j\in[-n:n]\setminus\{0\}}\left|d^{(m)}_{n,j}\right|\nonumber \\
		&\le& \sum_{j\in[-n:n]\setminus\{0\}}\frac{2m!}{|j|}\sum_{\substack{\{l_1,...,l_{m-1}\}\in \qquad\quad \\ \mathcal{P}_{m-1}([-n:n]\setminus\{0,j\})}} \frac{1}{|l_1\cdots l_{m-1}|} \nonumber \\
		&\le& 2m!m \sum_{\substack{\{l_1,...,l_{m}\}\in \qquad\quad \\ \mathcal{P}_{m}([-n:n]\setminus\{0\})}} \frac{1}{|l_1\cdots l_{m}|}\nonumber \\
		&\le& 2m!m\times \frac{1}{m!} \left(\frac{1}{|-n|}+\cdots+\frac{1}{|-1|}+\frac{1}{1}+\cdots+\frac{1}{n}\right)^m\nonumber\\
		&\le&2m\left[2\left(1+\log n\right)\right]^m, \label{eq:temp6}
	\end{eqnarray}
	where the first equality follows from the definition of $d^{(m)}_{n,0}$ in (\ref{eq:centGen}).
	
	When $m=1$, $|a^{(1)}_{n,j}| = (n!)^2/|j|$ for $j\in[-n:n]\setminus\{0\}$, and therefore 
	\begin{equation}
		\left|d^{(1)}_{n,j}\right|=\frac{(n!)^2}{(n+j)!(n-j)!|j|} \le \frac{1}{|j|}.
	\end{equation}
	This leads to
	\begin{eqnarray}
		D^{(m)}_n&\le&2 \sum_{j\in[-n:n]\setminus\{0\}}\left|d^{(m)}_{n,j}\right|\nonumber \\ &\le&4\sum_{j=1}^n\frac{1}{j}\nonumber \\
		&\le& 4(1+\log n),
	\end{eqnarray}
	which indicates that (\ref{eq:CnUB}) holds also for $m=1$.
\end{proof}

\section{Proof of Lemma \ref{lem:cenAppOrd} \label{sec:PrLemCenAppOrd}}

To prove Lemma \ref{lem:cenAppOrd}, let us introduce some additional lemmas.

\begin{lemma}
	For any $a,x\in\mathbb{R}_+$ satisfying $a^2\le x$,
	\begin{equation}
		x\ge a\log x \label{eq:xalogx}
	\end{equation}
	holds. \label{lem:xlogx}
\end{lemma}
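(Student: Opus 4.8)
The plan is to reduce the claim to the elementary inequality $\log t \le t/e$, which holds for every $t\in\mathbb{R}_+$, and to exploit the hypothesis in the equivalent form $a\le\sqrt{x}$. As a preliminary I would record this auxiliary bound: setting $\psi(t):=t/e-\log t$, one has $\psi'(t)=1/e-1/t$, so $\psi$ is minimized at $t=e$ with $\psi(e)=1-\log e=0$; hence $\log t\le t/e$ with equality only at $t=e$. The first genuine step of the proof of (\ref{eq:xalogx}) is then to dispose of the regime $x\le 1$: there $\log x\le 0$, so using $a>0$ we get $a\log x\le 0<x$, and the inequality is immediate. This isolates the only nontrivial case, $x>1$, where $\log x>0$.

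For $x>1$ I would use $a\le\sqrt{x}$ (which is exactly $a^2\le x$) to bound $a\log x\le \sqrt{x}\,\log x$, the multiplication by the positive factor $\log x$ preserving the inequality. It then suffices to show $\sqrt{x}\,\log x\le x$, i.e. $\log x\le\sqrt{x}$. Here I would apply the auxiliary bound $\log t\le t/e$ with $t=\sqrt{x}$, giving $\log x=2\log\sqrt{x}\le 2\sqrt{x}/e$; since $2/e<1$ this yields $\sqrt{x}\,\log x\le 2x/e<x$, and therefore $a\log x\le x$, completing the argument.

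The one load-bearing observation — and the step I would flag as the crux — is that the constraint $a^2\le x$ is indispensable rather than cosmetic. The unconstrained function $x\mapsto x-a\log x$ attains its global minimum $a(1-\log a)$ at $x=a$, and this value is \emph{negative} as soon as $a>e$; so no bound of the desired form can hold without tying $x$ to $a$. Coupling $a$ to $\sqrt{x}$ is precisely what tames the logarithm, and the slack furnished by $\log t\le t/e$ (with $2/e<1$) is exactly what remains after the reduction. I do not expect any delicate estimate beyond this: once the case split and the substitution $t=\sqrt{x}$ are in place, every remaining inequality is a one-line consequence of $\log t\le t/e$.
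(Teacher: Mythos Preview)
Your proof is correct, but it follows a different path from the paper's. The paper splits on the size of $a$: it studies $f_a(x)=x-a\log x$, notes the global minimum $a(1-\log a)$ at $x=a$, and for $a\le e$ concludes $f_a\ge 0$ everywhere; for $a>e$ it bootstraps, using the already-proved inequality $f_2(a)=a-2\log a\ge 0$ to get $f_a(a^2)=a(a-2\log a)\ge 0$, and then monotonicity of $f_a$ on $[a,\infty)\supset[a^2,\infty)$ finishes the job. Your argument instead splits on $x$, feeds the hypothesis in directly as $a\le\sqrt{x}$, and reduces everything to the single bound $\log t\le t/e$ applied at $t=\sqrt{x}$. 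Your route is a bit more streamlined and uses the constraint $a^2\le x$ from the outset, whereas the paper's route has the merit of isolating exactly when the constraint is actually needed (only for $a>e$) and yields the stronger conclusion $f_a\ge 0$ on all of $\mathbb{R}_+$ in the small-$a$ regime. Both are short and entirely valid; your remark in the third paragraph about the unconstrained minimum being negative for $a>e$ is precisely the observation that drives the paper's case split.
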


\begin{proof}
	Consider a function $f_a(x)=x-a\log x$ on $\mathbb{R}_+$.
	$f_a^\prime(x)=1-\frac{a}{x}$, and therefore it takes the minimum $a(1-\log a)$ at $x=a$.
	Then, we consider the following cases.\\
	
	\noindent (i) $a\le e$
	
	Since the minimum of $f_a(x)$ is larger than $0$,
	\begin{equation}
	\forall x\in \mathbb{R}_+, f_a(x)\ge 0, \label{eq:temp7}
	\end{equation}
	which means (\ref{eq:xalogx}) holds for any $x\in \mathbb{R}_+$.\\
	
	\noindent (i) $a> e$
	
	As a special case of (\ref{eq:temp7}), $f_2(y)=y-2\log y\ge 0$ holds for any $y\in \mathbb{R}_+$, which indicates that $f_a(a^2)=a(a-2\log a)\ge 0$.
	Besides, $f_a^\prime(x)>0$ for any $x$ larger than $a^2$.
	Combining these, we see that, for any $x$ larger than $a^2$, $f_a(x)\ge0$ holds, and so does (\ref{eq:xalogx}).
	
\end{proof}

\begin{lemma}
	For any $a\in\mathbb{R}_{\ge 0}$, $x\in\mathbb{R}_+$, and $\epsilon\in\mathbb{R}_+$ such that
	\begin{equation}
		\epsilon\le 
		\begin{cases}
		2^{a-\left(\frac{a}{\log 2}\right)^2} &; \ {\rm for} \ a \ge \log 2 \\
		2^{a-1} &; \ {\rm for} \ a < \log 2
		\end{cases}
		 \label{eq:epsCondGen}
	\end{equation}
	and
	\begin{equation}
		x\ge \tilde{x}_{a,\epsilon} := \log_2\left(\frac{2^a}{\epsilon}\right)+a\log_2\left(\log_2\left(\frac{2^a}{\epsilon}\right)\right), \label{eq:xCond}
	\end{equation}
	\begin{equation}
		\frac{x^a}{2^x} \le \epsilon \label{eq:xa_2x}
	\end{equation}
	holds.
	\label{lem:xa2x}
\end{lemma}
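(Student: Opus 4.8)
The plan is to take the base-$2$ logarithm of the target inequality (\ref{eq:xa_2x}), turning it into the equivalent statement $g(x)\ge 0$ for the function $g(x):=x-a\log_2 x-\log_2(1/\epsilon)$, and then to exploit the shape of $g$. Since $g'(x)=1-\tfrac{a}{x\log 2}$, the function $g$ is convex and non-decreasing on $[a/\log 2,\infty)$. Hence I would reduce the claim to two facts: (i) $\tilde{x}_{a,\epsilon}\ge a/\log 2$, so that $g$ is non-decreasing on $[\tilde{x}_{a,\epsilon},\infty)$; and (ii) $g(\tilde{x}_{a,\epsilon})\ge 0$. Together with the hypothesis $x\ge \tilde{x}_{a,\epsilon}$ these give $g(x)\ge g(\tilde{x}_{a,\epsilon})\ge 0$, which is exactly (\ref{eq:xa_2x}). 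The degenerate case $a=0$ is immediate, since then (\ref{eq:xa_2x}) reads $2^{-x}\le\epsilon$ while $\tilde{x}_{0,\epsilon}=\log_2(1/\epsilon)$.

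The crux is that both (i) and (ii) collapse to a single clean inequality. Writing $L:=\log_2(2^a/\epsilon)$, so that $\tilde{x}_{a,\epsilon}=L+a\log_2 L$ and $\log_2(1/\epsilon)=L-a$, a direct substitution gives
\[
g(\tilde{x}_{a,\epsilon})=a\bigl[\log_2 L+1-\log_2(L+a\log_2 L)\bigr]=a\log_2\left(\frac{2L}{L+a\log_2 L}\right),
\]
which is non-negative precisely when $L\ge a\log_2 L$. For (i), I would first note that $L\ge 1$ (established below from the condition on $\epsilon$), whence $\tilde{x}_{a,\epsilon}\ge L$; combined with $L\ge(a/\log 2)^2$ this yields $\tilde{x}_{a,\epsilon}\ge a/\log 2$ in both regimes $a\ge\log 2$ and $a<\log 2$. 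So everything funnels into proving the one inequality $L\ge a\log_2 L$.

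Here I would invoke Lemma \ref{lem:xlogx}. Rewriting $a\log_2 L=(a/\log 2)\log L$ with $\log$ the natural logarithm, Lemma \ref{lem:xlogx} (applied with its ``$a$'' set to $a/\log 2$ and its ``$x$'' set to $L$) gives exactly $L\ge(a/\log 2)\log L$ provided the hypothesis $(a/\log 2)^2\le L$ holds. The final verification is where the two-case condition (\ref{eq:epsCondGen}) does its work: since $L=a-\log_2\epsilon$, the bound $\epsilon\le 2^{a-(a/\log 2)^2}$ for $a\ge\log 2$ gives $L\ge(a/\log 2)^2$ directly, while the bound $\epsilon\le 2^{a-1}$ for $a<\log 2$ gives $L\ge 1>(a/\log 2)^2$. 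I expect the main obstacle to be purely bookkeeping: keeping the base-$2$ and natural logarithms straight through the reduction, confirming that the algebra genuinely compresses $g(\tilde{x}_{a,\epsilon})\ge 0$ down to $L\ge a\log_2 L$, and checking that the somewhat opaque threshold (\ref{eq:epsCondGen}) is exactly calibrated to supply $(a/\log 2)^2\le L$ in each regime so that Lemma \ref{lem:xlogx} applies.
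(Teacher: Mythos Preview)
Your proposal is correct and follows essentially the same route as the paper's proof. The paper also reduces everything to the single inequality $L\ge a\log_2 L$ (with $L=\log_2(2^a/\epsilon)$), obtains it from Lemma~\ref{lem:xlogx} via $(a/\log 2)^2\le L$ extracted from (\ref{eq:epsCondGen}), uses it to verify the bound at $\tilde{x}_{a,\epsilon}$, and then invokes monotonicity of $x^a/2^x$ on $[a/\log 2,\infty)$ together with $\tilde{x}_{a,\epsilon}\ge a/\log 2$; the only cosmetic difference is that you work with the base-$2$ logarithm $g(x)=x-a\log_2 x-\log_2(1/\epsilon)$ throughout, whereas the paper manipulates $x^a/2^x$ directly.
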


\begin{proof}
When $a=0$, (\ref{eq:xa_2x}) becomes $2^{-x}\le \epsilon$ and holds trivially, since (\ref{eq:xCond}) becomes $x\ge \log_2\left(\frac{1}{\epsilon}\right)$.
Therefore, we consider the case where $a>0$ below.
First, note that, under the condition (\ref{eq:epsCondGen}),
\begin{equation}
	\log_2 \left(\frac{2^a}{\epsilon}\right) \ge \left(\frac{a}{\log 2}\right)^2
\end{equation}
holds, and this means
\begin{equation}
	\log_2 \left(\frac{2^a}{\epsilon}\right) \ge \frac{a}{\log 2} \log \left( \log_2 \left(\frac{2^a}{\epsilon}\right)\right) = a \log_2\left( \log_2 \left(\frac{2^a}{\epsilon}\right)\right), \label{eq:temp3}
\end{equation}
because of Lemma \ref{lem:xlogx}.
Then, we see that
\begin{eqnarray}
	\frac{(\tilde{x}_{a,\epsilon})^a}{2^{\tilde{x}_{a,\epsilon}}}
	&=& \frac{\left[\log_2 \left(\frac{2^a}{\epsilon}\right) + a \log_2\left( \log_2 \left(\frac{2^a}{\epsilon}\right)\right) \right]^a}{\frac{2^a}{\epsilon} \left(\log_2\left(\frac{2^a}{\epsilon}\right)\right)^a}\nonumber \\
	&\le& \frac{2^a \left(\log_2 \left(\frac{2^a}{\epsilon}\right)\right)^a}{\frac{2^a}{\epsilon} \left(\log_2\left(\frac{2^a}{\epsilon}\right)\right)^a}\nonumber \\
	&=&\epsilon,
\end{eqnarray}
where we use the definition (\ref{eq:xCond}) of $\tilde{x}_{a,\epsilon}$ at the first equality, and (\ref{eq:temp3}) at the inequality.

On the other hand, defining $g(x):=\frac{x^a}{2^x}$, we obtain
\begin{equation}
	\frac{d}{dx}\log g(x) = \frac{a}{x}-\log 2,
\end{equation}
which implies that, when $x\ge \frac{a}{\log 2}$, $\log g(x)$ is monotonically deceasing, and so is $g(x)$.
Besides, under (\ref{eq:epsCondGen}), we can see that both
\begin{equation}
	\log_2 \left(\frac{2^a}{\epsilon}\right) \ge 1
\end{equation}
and
\begin{equation}
	\log_2 \left(\frac{2^a}{\epsilon}\right) \ge \frac{a}{\log 2}
\end{equation}
hold by simple algebra, and therefore
\begin{equation}
	\tilde{x}_{a,\epsilon}=\log_2\left(\frac{2^a}{\epsilon}\right)+a\log_2\left(\log_2\left(\frac{2^a}{\epsilon}\right)\right)
	\ge \frac{a}{\log 2}. 
\end{equation}
Hence, we obtain
\begin{equation}
	\frac{x^a}{2^x} \le \frac{(\tilde{x}_{a,\epsilon})^a}{2^{\tilde{x}_{a,\epsilon}}} \le \epsilon
\end{equation}
for $x\ge \tilde{x}_{a,\epsilon}$.

	
\end{proof}

Now, let us prove Lemma \ref{lem:cenAppOrd}.

\begin{proof}[Proof of Lemma \ref{lem:cenAppOrd}]

	Under (\ref{eq:epsCond}) and (\ref{eq:nth}), Lemma \ref{lem:xa2x} implies that
	\begin{equation}
		\frac{(2n+1)^{m\sigma^+}}{2^{2n+1}} \le \epsilon^\prime. \label{eq:temp5}
	\end{equation}
	Then, we see that
	\begin{eqnarray}
		&& \left|R_f(x,n,m,h)h^{2n-m+1}\right| \nonumber \\
		&\le& Ac^{2n+1}\left((2n+1)!\right)^{\sigma}m \left(\frac{em}{2}\right)^{2n}h^{2n-m+1}  \nonumber \\
		&\le& Ac^{2n+1}\left((2n+1)^{\sigma^+}\right)^{2n+1}m \left(\frac{em}{2}\right)^{2n}h^{2n-m+1} \nonumber \\
		&=& \frac{2A}{e}\left(\frac{ecm(2n+1)^{\sigma^+} h}{2}\right)^{2n+1}h^{-m} \nonumber \\
		&\le& \frac{2(ecm)^mA}{e}\frac{(2n+1)^{m\sigma^+}}{2^{2n+1}} \nonumber \\
		&\le& \frac{2(ecm)^mA}{e}\epsilon^\prime \nonumber \\
		&=& \epsilon.
	 	\label{eq:temp4}
	\end{eqnarray}
	Here, at the first inequality, we use (\ref{eq:RnUB}) with $M=Ac^{2n+1}\left((2n+1)!\right)^{\sigma}$, since $f\in\mathcal{G}_{A,C,\sigma}$.
	We also use $((2n+1)!)^\sigma\le((2n+1)^{\sigma^+})^{2n+1}$ at the second inequality, (\ref{eq:hcond}) at the third inequality, (\ref{eq:temp5}) at the fourth inequality, and (\ref{eq:epsprime}) at the last equality.
	Because of Theorem \ref{th:numDiff}, (\ref{eq:temp4}) indicates that (\ref{eq:diffErrUB}) holds.
	 
\end{proof}

\end{document}